\RequirePackage{fix-cm}
\documentclass[12pt,a4paper,notitlepage]{article}
\pdfoutput=1 
\usepackage[hyphens]{url}
\usepackage{hyperref}






\usepackage{amsmath,mathtools,etoolbox,enumerate,xspace,amsthm,amssymb,amsfonts}
\usepackage{microtype}
\mathtoolsset{mathic,centercolon}

\usepackage{xcolor}
\hypersetup{
  colorlinks,
  linkcolor={red!70!black},
  citecolor={green!50!black},
  urlcolor={blue!50!black},
  pdfauthor={Floris van Doorn},
  unicode,
  pdftitle={Designing a general library for convolutions}}
\usepackage[numbers]{natbib}

\newtheorem{theorem}{Theorem}[section]

\newtheorem{proposition}[theorem]{Proposition}
\newtheorem{lemma}[theorem]{Lemma}
\newtheorem{corollary}[theorem]{Corollary}

\theoremstyle{definition}
\newtheorem{definition}[theorem]{Definition}

\theoremstyle{remark}

\usepackage{ucs}
\usepackage[utf8x]{inputenc}
\usepackage[LGR, T1]{fontenc}
\usepackage{textcomp}
\usepackage{textgreek}
\usepackage{tikz-cd}
\usepackage{mathrsfs}

\usepackage[english]{babel}

\usepackage[capitalize]{cleveref}
\usepackage{xspace}

\usepackage{fontawesome} 
\usepackage{mathtools} 
\usepackage{upgreek} 

\definecolor{keywordcolor}{rgb}{0.7, 0.1, 0.1}   
\definecolor{commentcolor}{rgb}{0.4, 0.4, 0.4}   
\definecolor{symbolcolor}{rgb}{0.0, 0.1, 0.6}    
\definecolor{sortcolor}{rgb}{0.1, 0.5, 0.1}      
\definecolor{errorcolor}{rgb}{1, 0, 0}           
\definecolor{stringcolor}{rgb}{0.5, 0.3, 0.2}    

\usepackage{listings}

\lstset{language=lean}

\usepackage{flushend}



\newcommand{\nhds}{\mathcal{N}}

\def\U{\mathcal{U}}

\newcommand{\N}{\mathbb{N}}

\newcommand{\Z}{\mathbb{Z}}


\newcommand{\R}{\mathbb{R}}

\def\C{\mathbb{C}}
\newcommand{\anon}[2][1]{#1}

\newcommand{\forallf}[2]{\forall^f #1 ∈ #2}

\DeclareMathOperator{\supp}{supp}

\DeclareMathOperator{\tsupp}{tsupp}

\DeclareMathOperator{\ulift}{ulift}

\DeclareMathOperator{\up}{up}

\DeclareMathOperator{\ball}{B}


\newcommand{\eps}{\varepsilon}





\newcommand{\link}{\ensuremath{{}^\text{\faExternalLink}}}


\newcommand{\kk}{\Bbbk}

\hyphenation{math-lib}
\hyphenation{semi-ring}
\hyphenation{semi-group}
\hyphenation{semi-mo-dule}
\hyphenation{semi-al-ge-bra}
\hyphenation{geo-me-try}

\title{Designing a general library for convolutions}

\author{Floris van Doorn}
\date{}

\newcommand{\mllink}[1]{%
\href{https://github.com/leanprover-community/mathlib/blob/ec5f9adc44ba5559369652f24b5b1e8231dbc981/src/#1}{\link}}

\newcommand{\clink}[1]{\mllink{analysis/convolution.lean\#L#1}}

\begin{document}
\maketitle

\begin{abstract}
  We will discuss our experiences and design decisions obtained from building a formal library for
  the convolution of two functions.
  Convolution is a fundamental concept with applications throughout mathematics.
  We will focus on the design decisions we made to make the convolution general and easy to use,
  and the incorporation of this development in Lean's mathematical library \textsf{mathlib}.
\end{abstract}

\section{Introduction}%
\label{sec:introduction}

Convolutions are used throughout mathematics.
They are used in probability theory to compute the probability density function
for the sum of two independent random variables,
in image processing to blur images and detect edges,
in differential topology to approximate functions with smooth functions,
and it is used to define distributions as generalized functions.

In the most basic form, the convolution of two functions $f,g : \R \to \R$ is defined to be
$$(f * g) x = \int f(t)g(x-t)dt.$$

However, there are many variants and generalizations of convolutions:
\begin{itemize}
  \item Discrete convolution of functions defined on $\Z$;
  \item Convolution for functions defined on $\R^n$
  \item More generally, convolution for functions defined on a group $G$;
    equipped with a measure $\mu$;
  \item Convolution of a scalar function with a vector-valued function.
\end{itemize}
It is tricky to design a general library for convolution capturing all these variations,
especially if you realize that the various instances of convolution satisfy different properties.
For example, the convolution operator for functions defined on $\R^n$ is commutative,
but this no longer holds on (nonabelian) groups $G$.
And while the convolution of functions on $\R^n$ is smooth under suitable hypotheses,
this statement is nonsensical for the convolution of functions on $\Z$.

In this paper we discuss the design of a convolution library we designed
for the Lean 3 theorem prover~\cite{moura2015lean}. Lean is
developed principally by Leonardo de Moura at Microsoft
Research and implements a version of the calculus of inductive
constructions~\cite{coquand1988inductive}
with quotient types, non-cumulative universes, and proof irrelevance.
We use and incorporate our library into its mathematical library \textsf{mathlib}~\cite{mathlib},
which has more than 250 contributors and almost one million lines of code.
We give a single definition of convolution that captures all aforementioned cases,
and we prove all properties about the convolution
using the minimal assumptions needed for that property.
This flexibility in assumptions for each theorem is greatly aided
by the flexible type-class system implemented in Lean. 

In fact, we generalize the convolution enough
in order to compute the total derivative of the convolution of two multivariate functions,
even in the case where one of them takes values in any Banach space.
To do this we have to define a convolution in the case where both functions take values in
a normed space, and we use a bilinear map to combine the functions.
We could not find the computation
of the total derivative of the convolution anywhere in the literature;
all books that the author of this paper consulted compute only the partial derivatives
of the convolution~\cite{hirsh1976differentialtopology,yosida1995functionalanalysis}.

This formalization is interesting since the convolution has a lot of counterintuitive properties.
For example, the convolution of two smooth functions need not exist.
And even if the convolution exists everywhere, it need not be continuous.
This makes the formalization of these results particularly useful,
since one can formally check which results are true and what assumptions are needed for each result.

To get well-behaved properties of the convolution,
one needs to impose conditions on one or both functions stating that
they decay sufficiently rapidly at $\infty$.
The strongest such condition, on which we will focus,
is the requirement that one of the functions has compact support.
This is a condition that will ensure that the convolution well-behaved,
under a very weak condition of the other function (local integrability).

Other conditions could be imposed on the functions to obtain a well-behaved convolution.
For example, one can assume that both functions in the convolution are rapidly decreasing.
However, such results are typically not more general than the results in this paper,
since a relaxation of the behavior of one function 
has to be counteracted by a stricter condition on the other function.

We prove the following properties about convolution:
\begin{itemize}
\item Algebraic properties: commutativity, associativity and bilinearity;
\item Integrability properties for the convolution of integrable functions;
\item Smoothness properties: continuity, differentiability and being $C^n$;
\item Approximation properties: the convolution of $f$ with a small bump function is close to $f$.
\end{itemize}
The last property shows the power of convolutions:
they give a very general framework to approximate functions with smooth functions,
even if the original function has very weak properties (being locally integrable).
For the smoothness properties we assume that one of the functions has compact support.

When formalizing mathematics, there is often a difference in formalizing a specific theorem
and writing a reusable library that is nice and convenient to use.
In the former one often takes shortcuts when setting up the development
in a manner that is sufficient for proving the relevant theorem.
When designing a reusable library, one wants to focus more on giving very general definitions,
and proving the properties in a great generality, so that the library can be reused in many
areas. One often also spends more time on reliable naming conventions,
documentation and code readability.
These differences are one of the reasons that the reviewing process for \textsf{mathlib}
contributions is lengthy, especially for new users.
The written code has to be tidied,
pass the linter suite of \textsf{mathlib}~\cite{vandoorn2020maintaining}
and satisfy \textsf{mathlib}'s standards for code quality.

This paper is not a formalization of a single big theorem,
but instead describes the design of a reusable library of a tool, to be used in other developments.
Unless specifically mentioned,
all results in this paper have been incorporated into \textsf{mathlib}.
We will link to specific results in \textsf{mathlib}
using the icon \href{https://github.com/leanprover-community/mathlib}{\link} using
static links to the current version of \textsf{mathlib}.

There are many other proof assistants that have developed a substantial analysis library,
most notably Isabelle~\cite{holzl2013analysis}, HOL Light~\cite{harrison2013euclidean}
and Coq~\cite{boldo2015coquelicot}.
However, to my knowledge there are no proof assistants that have formalized the properties of
the convolution of smooth functions.

\section{Preliminaries}%
\label{sec:preliminaries}


\subsection{Type Classes}
\label{sub:typeclasses}
\textsf{mathlib} uses partially bundled
\emph{type-classes}~\cite{wadler1989polymorphism,spitters2011typeclasses} throughout the library.
Almost all results in \textsf{mathlib}
depend on one or more type-classes, and this paper we will see a couple of them.
For example, the type-class
\lstinline{[nontrivially_normed_field 𝕜]}\mllink{analysis/normed/field/basic.lean\#L513}
states that $\kk$ is a field
with a norm $\|{\cdot}\|:\kk\to\R$ satisfying
\begin{itemize}
  \item $d(x,y)\vcentcolon=\|x-y\|$ defines a metric on $\kk$;
  \item $\|xy\|=\|x\|\cdot\|y\|$;
  \item the norm on $\kk$ is nontrivial: there is an element $x\in\kk$ with $\|x\|>1$.
\end{itemize}
Outside code-snippets,
we will use the term \emph{normed field} to refer to nontrivially normed fields.
When applying a lemma that depends on a type-class,
Lean will automatically find these type-class arguments during \emph{type-class synthesis},
using the data provided in a theorem and globally-declared instances.
An example of a globally declared instance is \lstinline{[nontrivially_normed_field ℝ]}.%
\footnote{To be precise, this is a consequence of $\R$ being a densely-ordered field,
but this derivation happens automatically by type-class synthesis.}

If $\kk$ is a normed field, then we can consider all normed spaces over $\kk$.
For technical reasons we split this condition in two parts.

The use the class \lstinline{[normed_add_comm_group E]}\mllink{analysis/normed/group/basic.lean\#L55}
or ``$E$ is a normed group'' to state that
$(E,+,0)$ is an abelian group with a norm,
such that $d(x,y)\vcentcolon=\|x-y\|$ defines a metric on $E$.
A seminormed group\mllink{analysis/normed/group/basic.lean\#L49} has a similar definition,
with as only difference that $\|x\|=0$ need not imply that $x = 0$.

If $\kk$ is a normed field and $E$ is a normed group, then we have an additional type-class
\lstinline{[normed_space 𝕜 E]}\mllink{analysis/normed_space/basic.lean\#L39}
stating that $E$ is a vector space over $\kk$
with $\|k\cdot v\|=\|k\|\cdot\|v\|$ for $k\in\kk$ and $v\in E$.
The type-class \lstinline{normed_space} is a \emph{mixin},
which means that it takes other type-classes as arguments.
In this case the arguments are
\lstinline{[normed_field 𝕜]} and \lstinline{[seminormed_add_comm_group E]}.

The reason for this separation is that the class\\
\lstinline{[seminormed_add_comm_group E]}
contains all the properties that \emph{only} mention $E$. So for example,
if we want to use a lemma about the addition on $E$,
we only have to search for the instance \lstinline{[seminormed_add_comm_group E]} and not
for an instance of \lstinline{[normed_space 𝕜 E]}.
If we had a single type-class \lstinline{[normed_space 𝕜 E]} with all the properties,
then we would have to search for an instance \lstinline{[normed_space 𝕜 E]}
without knowing what $\kk$ is (since addition on $E$ does not refer to $\kk$).
With Lean's algorithm for type-class synthesis, it can be problematic to search for instances
without knowing all the arguments~\cite{baanen2022instances}.
In the definition given in \textsf{mathlib}, only properties of normed spaces that refer to
both $\kk$ and $E$ are in the \lstinline{[normed_space 𝕜 E]} class.
This means that when we use one of these properties,
both $\kk$ and $E$ are known once Lean has to synthesize an instance for this type-class.

Many particular properties of objects are also recorded in type-classes.
For example, the properties \lstinline{[sigma_finite μ]}, 
\lstinline{[is_add_left_invariant μ]} and \lstinline{[is_neg_invariant μ]}
(that we will define below) are all properties on measures that are marked as type-class arguments.
This means that we only have to prove these properties if they cannot be inferred from
the globally-declared instances, which does not come up often.
An example of such a global instance for these classes
states that $\mu\times\nu$ is $\upsigma$-finite
if both $\mu$ and $\nu$ are.\mllink{measure_theory/constructions/prod.lean\#L485}
\begin{lstlisting}
instance [sigma_finite μ] [sigma_finite ν] :
  sigma_finite (μ.prod ν)
\end{lstlisting}

\subsection{Filters}
\label{sub:filters}

\textsf{mathlib} has a general library for filters,
which are used throughout the topology, analysis and measure theory libraries.
Filters are used to make precise arguments that involve phrases such as
``for all $x$ close to $x_0$,'' ``for large enough $N$'' and ``for almost every $x\in \R$.''
In this way, filters can be used as a generalized version of bounded quantification.
The filter library is heavily inspired by the one in Isabelle~\cite{holzl2013analysis}.

Formally, a filter $F$ on $X$ is a collection of subsets of $X$ satisfying the following properties
\begin{itemize}
  \item $X\in F$;
  \item $F$ is upwards closed: if $X\in F$ and $Y\supseteq X$ then $Y \in F$;
  \item $F$ is closed under binary intersections: if $X,Y\in F$ then $X\cap Y\in F$.
\end{itemize}
Given a filter $F$, we can introduce the notion $\forallf{x}{F}, P(x)$
stating that $P$ holds eventually for $x \in F$. It is defined to mean that $\{x \mid P(x)\}\in F$.

Now the aforementioned examples become special cases of this notions.
``For all $x$ close to $x_0$, $P(x)$ holds'' means $\forallf{x}{\nhds_{x_0}}, P(x)$,
where the \emph{neighborhood filter} $\nhds_{x}$ is defined as
the collection of all neighborhoods of a point $x$ in a topological space:
$$\nhds_x=\{N \mid \text{there exists an open set $U\subseteq N$ with $x \in U$}\}.$$
Similarly, if $\mu$ is a measure on $X$ then ``for almost every $x\in X$, $P(x)$ holds'' means
$\forallf{x}{\text{AE}_\mu}, P(x)$, where the \emph{almost everywhere} filter of co-null sets is
defined to be
$$\text{AE}_\mu=\{A\mid \mu(X\setminus A)=0\}.$$
Note that the phrase ``$x$ is close to $x_0$'' by itself is not a precisely defined notion.
Filters only give a precise notion of ``for all $x$ close to $x_0$, $P(x)$ holds.''

We will see the definition of another filter in \Cref{sec:approximations}.

The filters $\mathcal{F}_X$ on $X$ are ordered by reverse inclusion:
$$F\le G\Longleftrightarrow G\subseteq F.$$

If $f:X\to Y$ then it induces a forward map $f_*:\mathcal{F}_X \to \mathcal{F}_Y$ defined by
$$A \in f_*(F) \Longleftrightarrow f^{-1}(A)\in F.$$
We can use filters to define limits and convergence.
If $f : X \to Y$, $F$ is a filter on $X$ and $G$ is a filter on $Y$,
we say that \emph{$f(x)$ tends to $G$ as $x$ tends to $F$} to mean that $f_*F\le G$.
One can derive that if $F$ and $G$ are neighborhood filters,
this exactly corresponds to the usual definition in a topological space.
If moreover $X$ and $Y$ are metric spaces,
it corresponds to the $\eps$-$\delta$ definition of limits.

\subsection{Differentiation}
\label{sub:differentiation}
Lean has a general theory of derivatives, formulated in terms of the Fréchet derivative
and developed mainly by \anon[{[redacted]}\footnote{In the anonymous version of this paper we avoid mentioning \textsf{mathlib} contributors by name.}]{Sébastien Gouëzel}. 

If $E$ and $F$ are normed spaces over a normed field $\kk$,
we denote the space of continuous $\kk$-linear maps from $E$ to $F$ as $L(E,F)=L_\kk(E,F)$.
The space $L(E,F)$ is itself a normed space,
where the norm $\|L\|$ is the least $c$ such that $\|Lx\|\le c\|x\|$ for all $x\in E.$
Suppose $f : E \to F$ is a map, $f' : L_\kk(E,F)$ is a continuous linear map and $x \in E$.
Then we write \lstinline{has_fderiv_at f f' x} for the statement that
$f$ has derivative $f'$ at $x$.
It is defined using the little-$o$ notation as
$f(x')-f(x)-f'(x'-x)=o(x'-x)$ as $x'$ tends to $x$.
If a derivative exists at $x$, we say that $f$ is differentiable at $x$.
We define the derivative \lstinline{fderiv 𝕜 f x} or $Df$ of $f$ at $x$
to be the derivative of $f$ at $x$ if it exists.
It is defined to be $0$ if $f$ has no derivative at $x$.

Furthermore, if $n \in \N\cup\{\infty\}$ then we write \lstinline{cont_diff 𝕜 n f}
or ``$f$ is $C^n$''
to state that $f$ is $n$-times continuously differentiable.
It is defined in terms of the existence of a Taylor series,
and it has the following defining properties.
\begin{lstlisting}
cont_diff 𝕜 0 f ↔ continuous f
cont_diff 𝕜 (n + 1) f ↔ differentiable 𝕜 f ∧
  cont_diff 𝕜 n (fderiv 𝕜 f)
cont_diff 𝕜 ∞ f ↔ ∀ (n : ℕ), cont_diff 𝕜 n f
\end{lstlisting}


\subsection{Integration}
\label{sub:integration}

Lean has a large library for measure theory and integration, including the Bochner integral,
Fubini's theorem, Differentiation under an integral
and the Haar measure~\cite{vandoorn2021haar}.\footnote{Results
in \textsf{mathlib} that we will not use in this paper include
the fundamental theorems of calculus,
change of variables for higher-dimensional integrals~\cite{gouezel2022changeofvariables},
and the divergence theorem~\cite{kudryashov2022integral}.}

Lean's Bochner integral is defined for maps from a measure space
into a Banach space (complete normed space) over $\R$.
Previously we required the codomain to be second countable,
but after a Herculean refactor by \anon[{[redacted]}]{Sébastien Gouëzel}%
\href{https://github.com/leanprover-community/mathlib/pull/12942}{\link}
this assumption is not necessary anymore.

A \emph{measurable space}\,\mllink{measure_theory/measurable_space_def.lean\#L50}
is a space equipped with a designated $\upsigma$-algebra of measurable sets,
and a \emph{measure space}\,\mllink{measure_theory/measure/measure_space_def.lean\#L444}
is a measurable space with a specified measure.
A \emph{simple function}\,\mllink{measure_theory/integral/lebesgue.lean\#L48}
is a function with finitely many values where all preimages are measurable.
If $f:X\to Y$ is a map from a measure space $(X,\mu)$ to a topological space $Y$,
then $f$ is \emph{$\mu$-a.e.~strongly measurable}%
\,\mllink{measure_theory/function/strongly_measurable.lean\#L103}
if it is $\mu$-almost everywhere equal to the limit of
a sequence of simple functions.
If the codomain of $f$ is a normed group, then $f$ is \emph{$\mu$-integrable} if
it is $\mu$-a.e.~strongly measurable and the Lower Lebesgue integral of $\|f\|$ is finite.\mllink{measure_theory/function/l1_space.lean\#L393}
$f$ is \emph{integrable on $A\subseteq X$}\,\mllink{measure_theory/integral/integrable_on.lean\#L80}
if $f$ is integrable w.r.t. to the measure $\mu|_A$ defined by $\mu|_A(B)=A\cap B$.
If $X$ is also a topological space, we say that $f$ is
\emph{locally integrable}\,\mllink{measure_theory/function/locally_integrable.lean\#L34} if
$f$ is integrable on all compact subsets of $X$.

One can define the Bochner integral by first defining it for simple $L^1$ functions
(as a simple linear combination),
and then extend it to all $L^1$ functions by noticing that
the simple $L^1$-functions are dense
in all $L^1$-functions.\mllink{measure_theory/function/simple_func_dense_lp.lean\#L686}.
Then the \emph{Bochner integral}\,\mllink{measure_theory/integral/bochner.lean\#L699}
$\int_X f(x)d\mu(x)$ of a function $f$
from a measure space $(X,\mu)$
to a Banach space $E$ is defined to be $0$ if $f$ is not integrable,
and otherwise it is defined to be the integral of $f$ viewed as an element of $L^1$.%

We say that a $\mu$ is
\emph{$\upsigma$-finite}\,\mllink{measure_theory/measure/measure_space.lean\#L2542} if there is a
countable collection of sets with finite measure that span $X$.
If $(X,\mu)$ and $(Y,\nu)$ are $\upsigma$-finite measure spaces,
then there is a \emph{product measure}\,\mllink{measure_theory/constructions/prod.lean\#L322}
$\mu\times\nu$ on $X\times Y$ satisfying
$$(\mu\times\nu)(A\times B)=\mu(A)\nu(B).$$

The Bochner integral satisfies Fubini's theorem~\cite{vandoorn2021haar}, which comes in two parts.

Fubini's theorem neatly expresses the integrability condition for functions $f:X\times Y\to E$.
If $f$ is $(\mu\times\nu)$-a.e.~strongly measurable then $f$ is $(\mu\times\nu)$-integrable iff
$y\mapsto f(x,y)$ is $\nu$-integrable for $\mu$-almost all $x\in X$ and
$x\mapsto\int_Y \|f(x,y)\|d\nu(y)$ is $\mu$-integrable.%
\mllink{measure_theory/constructions/prod.lean\#L866}
There is of course also a symmetric version of this condition.%
\mllink{measure_theory/constructions/prod.lean\#L874}

Furthermore, Fubini's theorem states how to compute the integral.
If $f:X\times Y\to E$ is $(\mu\times\nu)$-integrable, then%
\mllink{measure_theory/constructions/prod.lean\#L1025}
\begin{align*}
  \int_{X\times Y}f(z)d(\mu\times\nu)(z)&=\int_X\int_Y f(x,y)d\nu(y)d\mu(x)\\
  &=\int_Y\int_X f(x,y)d\mu(x)d\nu(y).
\end{align*}

Using the dominated convergence theorem, we can prove
under what conditions a parametric integral is continuous.
\begin{proposition}
\label{prop:int_continuous}
Let $Z$ be a first-countable topological space, $F: Z\times X\to E$ be a map and $z_0 \in Z$
such that
\begin{itemize}
  \item $F(z,{-})$ is $\mu$-a.e.~strongly measurable for $z$ near $z_0$;
  \item $F({-},x)$ is continuous for $\mu$-a.e. $x \in X$;
  \item There is an integrable bounding function $B:X\to\R$ such that
    $\|F(z,x)\|\le B(x)$ for $z$ near $z_0$ and $\mu$-almost all $x$.
\end{itemize}
Then $z\mapsto \int F(z,x)d\mu(x)$ is continuous at $z_0$.%
\mllink{measure_theory/integral/bochner.lean\#L898}
\end{proposition}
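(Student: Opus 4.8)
The plan is to reduce the statement to the filter version of the dominated convergence theorem for the Bochner integral. Recall from \Cref{sub:filters} that continuity of a function $g : Z \to E$ at $z_0$ is precisely the assertion that $g(z)$ tends to $g(z_0)$ as $z$ tends to $z_0$, i.e.\ $g_*(\nhds_{z_0}) \le \nhds_{g(z_0)}$. Writing $g(z) = \int F(z,x)\,d\mu(x)$, it therefore suffices to show that $g(z)$ tends to $\int F(z_0,x)\,d\mu(x)$ along the neighborhood filter $\nhds_{z_0}$.

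First I would observe that, since $Z$ is first-countable, the neighborhood filter $\nhds_{z_0}$ is countably generated; this is the sole role of the first-countability hypothesis, and it is exactly what the filter formulation of dominated convergence requires. Over a countably generated filter $l$, that theorem concludes $\int F_z\,d\mu \to \int f\,d\mu$ along $l$ provided that (i) $F_z$ is $\mu$-a.e.\ strongly measurable for $z$ eventually in $l$; (ii) there is an integrable $B$ with $\|F_z\| \le B$ holding $\mu$-a.e.\ for $z$ eventually in $l$; and (iii) $F_z(x) \to f(x)$ along $l$ for $\mu$-almost every $x$.

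Next I would match each hypothesis, taking $l = \nhds_{z_0}$, $F_z = F(z,-)$, and limit function $f = F(z_0,-)$. Conditions (i) and (ii) are immediate: the phrase ``for $z$ near $z_0$'' in the statement is by definition $\forallf{z}{\nhds_{z_0}}$, so the strong-measurability and bounding assumptions are exactly the eventual conditions required, and integrability of $B$ is assumed outright. For (iii) I would use the second assumption: for $\mu$-almost every $x$ the map $F(-,x)$ is continuous, and continuity at $z_0$ unfolds to $F(z,x) \to F(z_0,x)$ along $\nhds_{z_0}$, which is the required pointwise convergence. The integrability of the limit $F(z_0,-)$ need not be supplied separately, as it is derived inside the lemma from the bound together with the a.e.\ convergence.

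The main obstacle is the bookkeeping around the two ``almost every'' quantifiers and their interaction with the filter. The bounding hypothesis is an ``$\mu$-a.e.\ in $x$, eventually in $z$'' statement, whereas the a.e.-continuity yields an ``$\mu$-a.e.\ in $x$'' statement about a limit along $z$; one must feed these to the lemma in exactly the form it expects and check that the nesting of the $\mu$-a.e.\ and $\nhds_{z_0}$-eventual quantifiers agrees. No delicate $\eps$-$\delta$ estimate is involved: once $\nhds_{z_0}$ is recognized as countably generated and the hypotheses are aligned, the conclusion follows directly from the lemma, and the final step is merely to reinterpret the resulting limit as continuity of $z \mapsto \int F(z,x)\,d\mu(x)$ at $z_0$.
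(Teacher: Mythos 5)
Your proposal is correct and takes essentially the same route as the paper, which introduces this proposition with precisely the remark that it follows from the dominated convergence theorem and defers to the \textsf{mathlib} proof—which is exactly your argument: first-countability makes $\nhds_{z_0}$ countably generated, and the filter version of dominated convergence then applies with the a.e.~continuity supplying the a.e.~pointwise convergence. Your treatment of the quantifier bookkeeping and the observation that integrability of $F(z_0,{-})$ is derived rather than assumed are both accurate.
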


\anon[{[redacted]}]{Patrick Massot} has proven very general results in Lean that show that we can differentiate
under a parametric integral.
The result we will use is the following, but it is not the most general version that is provided.
\begin{proposition}
\label{prop:int_deriv}
Let $\kk=\R$ or $\kk=\C$, let $H$ be a normed space and $E$ be a Banach space, both over $\kk$.
Let $F: H\times X\to E$, $F':H\times X\to L_\kk(H,E)$, and $z_0\in Z$ such that
\begin{itemize}
  \item $F(z,{-})$ is $\mu$-a.e.~strongly measurable for $z$ near $z_0$;
  \item $F(z_0,{-})$ is $\mu$-integrable;
  \item $F'(z_0,{-})$ is $\mu$-a.e.~strongly measurable;
  \item $F({-},x)$ has derivative $F'(z,x)$ at $z$ for $\mu$-a.e. $x \in X$ and all $z$ near $z_0$;
  \item There is an integrable bounding function $B:X\to\R$ such that
    $\|F'(z,x)\|\le B(x)$ for all $z$ and $\mu$-almost all $x$.
\end{itemize}
Then $z\mapsto \int F(z,x)d\mu(x)$ has derivative $\int F'(z_0,x)d\mu(x)$ at $z_0$.%
\mllink{analysis/calculus/parametric_integral.lean\#L180}
\end{proposition}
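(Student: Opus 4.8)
The plan is to reduce the differentiability statement to the continuity result of \Cref{prop:int_continuous}, applied to a suitably normalized remainder term. Write $G(z)=\int F(z,x)\,d\mu(x)$ and $L=\int F'(z_0,x)\,d\mu(x)$; the map $L$ is well-defined because $\|F'(z_0,x)\|\le B(x)$ together with the a.e.~strong measurability of $F'(z_0,{-})$ makes $F'(z_0,{-})$ integrable. Unfolding the definition of the Fréchet derivative, I must show that $G(z)-G(z_0)-L(z-z_0)=o(z-z_0)$ as $z$ tends to $z_0$. Since evaluation at a fixed vector is a bounded linear map $L_\kk(H,E)\to E$, it commutes with the Bochner integral, so $L(z-z_0)=\int F'(z_0,x)(z-z_0)\,d\mu(x)$, and the whole remainder equals $\int R(z-z_0,x)\,d\mu(x)$, where $R(h,x)\vcentcolon=F(z_0+h,x)-F(z_0,x)-F'(z_0,x)h$. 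For this manipulation to make sense I first have to check that $F(z,{-})$ is integrable for $z$ near $z_0$, so that $G(z)$ is defined.

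Next I would introduce the quotient $\Phi(h,x)\vcentcolon=R(h,x)/\|h\|$ for $h\ne 0$ and $\Phi(0,x)\vcentcolon=0$, and apply \Cref{prop:int_continuous} with the parameter space $H$, which is a metric space and hence first-countable, at the point $h=0$. This needs three inputs: (i) a.e.~strong measurability of $\Phi(h,{-})$ for $h$ near $0$, which follows from the measurability hypotheses on $F(z,{-})$ and $F'(z_0,{-})$ together with the fact that postcomposition with the bounded evaluation map preserves strong measurability; (ii) continuity of $\Phi({-},x)$ near $0$ for a.e.~$x$, which away from $0$ is clear and at $0$ is precisely the assertion that $F'(z_0,x)$ is the derivative of $F({-},x)$ at $z_0$; and (iii) an integrable function dominating $\|\Phi(h,{-})\|$ for $h$ near $0$.

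The crux of the argument, and the step I expect to be the main obstacle, is producing the dominating function in (iii). Here I would invoke the mean value inequality: for a.e.~$x$ the map $F({-},x)$ is differentiable on a convex neighborhood of $z_0$ with derivative bounded in norm by $B(x)$, so for $h$ small enough that the segment $[z_0,z_0+h]$ lies inside that neighborhood we obtain $\|F(z_0+h,x)-F(z_0,x)\|\le B(x)\|h\|$. Combined with $\|F'(z_0,x)h\|\le B(x)\|h\|$, this gives $\|\Phi(h,x)\|\le 2B(x)$, an integrable bound independent of $h$. The delicate points are that the bound on $F'$ must hold on a whole convex set, not merely at $z_0$, in order to feed the mean value inequality, and that the various ``for $h$ near $0$'' qualifiers must be threaded consistently through the underlying filter arguments. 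Conveniently, the same estimate $\|F(z_0+h,x)\|\le\|F(z_0,x)\|+B(x)\|h\|$ also discharges the earlier side condition that $G(z)$ is defined near $z_0$.

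Finally, \Cref{prop:int_continuous} yields $\int\Phi(h,x)\,d\mu(x)\to\int\Phi(0,x)\,d\mu(x)=0$ as $h\to 0$. Since $\int\Phi(h,x)\,d\mu(x)=\bigl(G(z_0+h)-G(z_0)-Lh\bigr)/\|h\|$ for $h\ne 0$, this is exactly the little-$o$ estimate, so $G$ has derivative $L=\int F'(z_0,x)\,d\mu(x)$ at $z_0$, as required. I note that one could equally avoid \Cref{prop:int_continuous} and argue directly: reduce the little-$o$ limit to sequences $h_n\to 0$ using first-countability, and apply the dominated convergence theorem to $\Phi(h_n,{-})\to 0$ with the same dominating bound $2B$; the two routes share the mean value inequality as their essential ingredient.
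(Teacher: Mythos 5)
The paper itself offers no proof of \Cref{prop:int_deriv}: it is imported as a background result, with the proof living in \textsf{mathlib}'s parametric-integral file. Your argument is correct, and it is essentially that formalized proof: the mean value inequality on a convex neighborhood of $z_0$ converts the derivative bound $B$ into the Lipschitz estimate $\|F(z_0+h,x)-F(z_0,x)\|\le B(x)\|h\|$, which both guarantees that $\int F(z,x)\,d\mu(x)$ is defined near $z_0$ and dominates the normalized remainder $\Phi$ by $2B(x)$, after which dominated convergence yields the little-$o$ estimate. Routing the last step through \Cref{prop:int_continuous} rather than through a direct sequential dominated-convergence argument is a cosmetic difference; as you note yourself, the two are interchangeable, and the sequential route is the one taken in the formalization. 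You also correctly identify the genuinely delicate point: the differentiability and the bound on $F'$ must hold, for a.e.\ $x$, at \emph{every} $z$ in a whole convex neighborhood of $z_0$ (not merely at $z_0$), which is exactly how the formalized hypotheses are phrased.
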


Let $(G,+,0)$ be a measurable group.
This means that $G$ is a group equipped with a $\upsigma$-algebra such that
the addition ${+}:G\to G\to G$ and negation ${-}:G\to G$ are measurable.
If $\mu$ a measure on $G$,
we say that $\mu$ is \emph{invariant under left addition}
or \emph{left-invariant}\,\mllink{measure_theory/group/measure.lean\#L38} if
$\mu(g+A)=\mu(A)$ for all $g\in G$ and measurable $A\subseteq G$.
In this case, we also have that for any function $f:G\to E$
that\mllink{measure_theory/group/integration.lean\#L79}
$$\int_G f(g+x)d\mu(x)=\int_G f(x)d\mu(x).$$
Note that this equality also holds in $f$ is not integrable
and even if $f$ is not (a.e.) measurable. The reason is that $f$ is integrable if and only if
$x \mapsto f(g+x)$ is integrable, and if this is false, both sides of the equation equal $0$.

Similarly, we say that $\mu$ is \emph{invariant under negation}
or \emph{negation-invariant}\,\mllink{measure_theory/group/measure.lean\#L208}
if $\mu(-A)=\mu(A)$ for all measurable $A\subseteq G$.

If $G$ is a second-countable locally compact Hausdorff topological group,
then there is a left-invariant measure $\mu$ that is finite on compact sets,
and nonzero on non-empty open sets,
called the \emph{Haar measure}.\mllink{measure_theory/measure/haar.lean\#L565}
Moreover, this measure is
unique up to a scalar factor~\cite{vandoorn2021haar}.\mllink{measure_theory/measure/haar.lean\#L594}

\section{Definition of the convolution}%
\label{sec:convolution}

As mentioned in the introduction, we want to define a general definition of the convolution,
so that it is applicable in many situations.
For the convolution $f*g$ we don't want to require that the domains of $f$ and $g$ is $\R$,
or even any Euclidean space.
We can be more general by requiring that the domain $f$ and $g$ is an arbitrary group $(G,+,0)$
equipped with a $\upsigma$-algebra and a measure $\mu$.
In many of the properties we prove we will assume that
addition and negation are measurable operations, and that the measure is invariant under
addition and negation.
This is fine, since in most applications $G$ will be a locally compact group
with Haar measure $\mu$.
In fact, for the bare definition of convolution we don't even assume that $G$ is a group,
but just a type equipped with a binary subtraction operation ${-}:G\to G\to G$.

For the codomains of $f$ and $g$ we also want to have a general choice.
One of the applications we want to provide is when one starts with a function $f:\R^n\to\R^m$
that is merely continuous or even locally integrable.
One can take the convolution with a small bump function $\varphi:\R^n\to\R$ around the origin
to obtain a smooth function $\varphi*f$ that is close to the original function $f$.

This suggests the definition where we define the convolution $\varphi*_\mu f$ for
$\varphi : G \to \kk$ and $f: G \to E$ where $\kk$ is a normed field
and $E$ is a normed vector space over $\kk$, which we can define as
$$(\varphi *_\mu f)(x) = \int_G \varphi(t)f(x-t)d\mu(t).$$

This definition has two disadvantages.
First, it is asymmetric: the first function is scalar-valued
and the second function can be vector-valued.
This is not a problem per se, but it makes it awkward to work with.
We cannot state commutativity of the convolution in general,
but only in the special case where the second function is also scalar-valued.
Since we cannot state the commutativity of the convolution in general,
we cannot use it to prove symmetric versions lemmas.
Suppose we have proven that $\varphi *_\mu f$ is continuous
when $\varphi$ is continuous with compact support and $f$ is locally integrable.
We cannot conclude the symmetric version from this,
which states that $\varphi *_\mu f$ is also continuous when
$f$ is continuous with compact support and $\varphi$ is locally integrable.

Second, there is a problem with computing the total derivative of the convolution.
To work with derivatives, we have to assume that $G$ is also a normed vector space over $\kk$.
If $f$ is $C^1$ with compact support and $\varphi$ is locally integrable,
then we can prove that $D_x(\varphi*_\mu f)=(\varphi*_\mu Df)(x)$ for $x\in G$.
Note that on the right hand-side $Df:G\to L(G,E)$ is a family of continuous linear maps,
and so we take the convolution in the space $L(G,E)$ of continuous $\kk$-linear maps
from $G$ to $E$.

However, more commmonly we will be dealing with the symmetric case,
where $\varphi$ is $C^1$ with compact support and $f$ is locally integrable.
This case occur for example when we want to smoothen a vector-valued function
by taking the convolution with a smooth scalar-valued function.
In this case stating any formula for $D_x(\varphi*_\mu f)$ is difficult.
We want to say that it is the convolution of $D\varphi$ with $f$, evaluated at $x$,
but this does not fit in our current definition, since $D\varphi$ has type $L(G,\kk)$
and is not a scalar-valued function anymore. In this particular case,
it is not particularly hard to write down what the derivative should be, namely for $x,x'\in G$
\begin{equation}
  D_x(\varphi *_\mu f)(x') = \int_G D_t\varphi(x')f(x-t)d\mu(t).
  \label{eq:conv_diff_scalar}
\end{equation}

The right-hand side is almost a convolution, except that we have replaced scalar multiplication
inside the integral with a different bilinear map.
Indeed, we can define a bilinear map $A:L(G,\kk)\times E \to L(G,E)$ by $A(v,y)(x)=v(x)y$.
Now we can rewrite \eqref{eq:conv_diff_scalar} as
$$D_x(\varphi *_\mu f) = \int_G A(D_t\varphi,f(x-t))d\mu(t).$$
This suggests a more general definition of convolution that depends on a bilinear map.
This is the definition of convolution that we have formalized.

For the remainder of the paper,
we assume that $E$, $E'$ are normed spaces and $F$ is a Banach space over a normed field $\kk$.
Furthormore, we assume that $(G,\mu)$ is a measure space with a designated subtraction operation,
$L:E \times E' \to F$ is a continuous bilinear map,
and that $f:G\to E$ and $g:G\to E'$ are functions.

\begin{definition}
  The \emph{convolution of $f$ and $g$ with respect to $L$ and $\mu$}\,\clink{355} is defined by
  \begin{equation}
    (f*_{L,\mu}g)(x)\vcentcolon= \int_G L(f(t),g(x-t))d\mu(t).\label{eq:convolution_def}
  \end{equation}
  We say that \emph{the convolution $f*_{L,\mu}g$ exists at $x$}\,\clink{130}
  if the integrand of \eqref{eq:convolution_def} is integrable.
\end{definition}
Recall that an integral is defined to be $0$ if the integrand is not integrable.
Of course, for most lemmas about convolution we will need to assume that the convolution exists,
or a stronger condition that implies that the convolution exists.

If $G$ is nonabelian, there are multiple definitions of convolution.
We could have also defined the convolution $(f*_{L,\mu}g)(x)$ to be
$$\int_G L(f(-t+x),g(t))d\mu(t),$$
which is different unless $G$ is unimodular.
While we do not go into the theory of unimodular groups and the modular function
in the formalization, we generally avoid assuming that the group $G$ is abelian.
In this paper, we only assume that $G$ is abelian in the commutativity of the convolution,
and when we assume that $G$ is a (semi)normed group
in parts of Sections \ref{sec:smoothness} and \ref{sec:approximations}.

\section{Algebraic Properties}
\label{sec:algebra}

\begin{proposition}
The convolution satisfies the following algebraic properties for $f,f':G\to E$, $g,g':G\to E'$
and $c \in \kk$
\begin{enumerate}
\item $(cf)*_{L,\mu}g=f*_{L,\mu}(cg)=c(f*_{L,\mu}g)$;\,\clink{382}
\item if $f*_{L,\mu}g$ and $f*_{L,\mu}g'$ exist, then\,\clink{401}
  $$f*_{L,\mu}(g+g')=f*_{L,\mu}g+f*_{L,\mu}g';$$
\item if $f*_{L,\mu}g$ and $f'*_{L,\mu}g$ exist, then\,\clink{405}
  $$(f+f')*_{L,\mu}g=f*_{L,\mu}g+f'*_{L,\mu}g;$$
\item Let $L^t$ be the transpose of $L$, defined by $L^t(x,y)=L(y,x)$.
If $G$ is an abelian measurable group
and $\mu$ is invariant under left addition and negation,
then $f*_{L,\mu}g=g*_{L^t,\mu}f.$\clink{529}
\end{enumerate}
\end{proposition}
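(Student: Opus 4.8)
The plan is to prove the identity pointwise and reduce it to a single change of variables inside the integral. Fixing $x \in G$ and expanding both sides using \eqref{eq:convolution_def} together with the definition $L^t(a,b) = L(b,a)$, I have
\[
(f *_{L,\mu} g)(x) = \int_G L(f(t), g(x-t))\, d\mu(t), \qquad (g *_{L^t,\mu} f)(x) = \int_G L(f(x-t), g(t))\, d\mu(t),
\]
so it suffices to show that these two integrals coincide.

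The key observation is that the map $t \mapsto x - t$ carries one integrand into the other: writing $H(s) = L(f(x-s), g(s))$ for the second integrand, the substitution $s = x - t$ gives $H(x-t) = L(f(x-(x-t)), g(x-t)) = L(f(t), g(x-t))$, which is precisely the first integrand. Here I use that $G$ is abelian to compute $x - (x-t) = t$; this is the single place where commutativity of $G$ is essential. The whole argument therefore reduces to the claim that this change of variables preserves the integral of an arbitrary map $H : G \to F$, namely $\int_G H(x-t)\, d\mu(t) = \int_G H(t)\, d\mu(t)$.

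To establish that claim I would decompose $t \mapsto x - t = x + (-t)$ as negation followed by left translation and apply the two invariance properties of $\mu$ in turn. Setting $K(t) = H(x + t)$, negation invariance yields $\int_G K(-t)\, d\mu(t) = \int_G K(t)\, d\mu(t)$, i.e.\ $\int_G H(x - t)\, d\mu(t) = \int_G H(x + t)\, d\mu(t)$; then left-invariance gives $\int_G H(x + t)\, d\mu(t) = \int_G H(t)\, d\mu(t)$, and composing the two proves the claim. Applying it to $H$ then gives $(f *_{L,\mu} g)(x) = \int_G H(x-t)\, d\mu(t) = \int_G H(t)\, d\mu(t) = (g *_{L^t,\mu} f)(x)$, as desired.

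A pleasant feature of this approach, which I would highlight, is that the two invariance lemmas hold even when the integrand fails to be integrable (in which case both sides vanish, as noted after the statement of left-invariance). Consequently the commutativity identity holds \emph{unconditionally}, with no hypothesis that either convolution exists. The main obstacle is thus not analytic but purely a matter of careful bookkeeping: getting the order in which negation and left translation are applied exactly right, and confirming that the abelian hypothesis is what licenses the cancellation $x - (x-t) = t$. Everything else is a direct appeal to the stated invariance properties of $\mu$.
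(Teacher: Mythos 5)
Your treatment of item 4 is correct and is essentially the paper's own proof: both arguments rest on the change of variables $t \mapsto x - t$ inside the integral, justified by the invariance of $\mu$, together with the abelian cancellation $x - (x-t) = t$ and the identity $L(f(x-s),g(s)) = L^t(g(s),f(x-s))$. You spell out two points the paper leaves implicit, and both are sound: the decomposition of the substitution into negation invariance followed by left-invariance (the paper just says ``we used the invariance of $\mu$''), and the observation that the identity holds unconditionally because the translation and negation identities for integrals remain valid for non-integrable integrands --- this matches the paper's remark in the integration preliminaries and explains why item 4, unlike items 2 and 3, carries no existence hypothesis. The one omission: your proposal is silent on items 1--3. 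The paper dismisses them as immediate, and they are (bilinearity of $L$ plus linearity of the Bochner integral), but a complete write-up should at least note that items 2 and 3 genuinely need the stated existence hypotheses, since additivity of the integral, $\int (h_1 + h_2)\,d\mu = \int h_1\,d\mu + \int h_2\,d\mu$, is only valid when $h_1$ and $h_2$ are integrable --- in contrast to the scalar identity in item 1 and the invariance identities you use in item 4, which hold unconditionally.
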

\begin{proof}
The first 3 items are immediate. The last equality follows from
\begin{align*}
  &\mathrel{\hphantom{=}} \int L(f(t),g(x-t))d\mu(t)\\
  &=\int L(f(x-s),g(x - (x - s)))d\mu(s)\\
  &=\int L^t(g(s),f(x-s))d\mu(s),
\end{align*}
where we used the invariance of $\mu$ in the first equality.
\end{proof}
Stating associativity is trickier in this general setting.
\begin{proposition}
Assume that $\kk$ is either $\R$ or $\C$ and
suppose that we are given a group $G$ with measurable addition,
$x_0\in G$,
a measure $\mu$ and a right-invariant measure $\nu$ on $G$, both of which are $\upsigma$-finite,
normed spaces $E_i$ and Banach spaces $F_i$ over $\kk$ ($i\in\{1,2,3\}$),
functions $f_i:G\to E_i$
and continuous bilinear maps
\begin{align*}
  L_1&:E_1\times E_2 \to F_1&L_3&:E_1\times F_2 \to F_3\\
  L_2&:F_1\times E_3 \to F_3&L_4&:E_2\times E_3 \to F_2.
\end{align*}
Suppose that
\begin{itemize}
  \item For all $x_i\in E_i$ we have
  \begin{equation}L_2(L_1(x_1,x_2),x_3)=L_3(x_1,L_4(x_2,x_3));\label{eq:assoc_assum}\end{equation}
  \item $f_1*_{L_4,\mu}f_2$ and $f_2*_{L_1,\nu}f_3$ exist;
  \item the map $(t,s)\mapsto L_3(f_1(s),L_4(f_2(t-s),f_3(x_0-t)))$ is $(\nu\times\mu)$-integrable,
\end{itemize}
then~\clink{804}
$$(f_1*_{L_1,\mu}f_2)*_{L_2,\nu}f_3 = f_1*_{L_3,\mu}(f_2*_{L_4,\nu}f_3).$$
\end{proposition}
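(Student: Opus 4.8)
The plan is to unfold both sides into iterated double integrals, swap the order of integration by Fubini, and then reconcile the two expressions using a change of variables justified by the right-invariance of $\nu$. First I would expand the left-hand side. Writing $h = f_1 *_{L_1,\mu} f_2$, the value $(h *_{L_2,\nu} f_3)(x_0)$ is $\int_G L_2(h(s), f_3(x_0 - s))\,d\nu(s)$, and substituting the definition of $h(s)$ produces an integral of $L_2\bigl(\int_G L_1(f_1(t), f_2(s-t))\,d\mu(t),\, f_3(x_0-s)\bigr)$. Since $y \mapsto L_2(y, f_3(x_0-s))$ is a continuous linear map and the inner integral is well-defined (the existence hypothesis guarantees $f_1 *_{L_1,\mu} f_2$ exists at the relevant points), I can commute this linear map with the Bochner integral to obtain
$$\int_G\int_G L_2\bigl(L_1(f_1(t), f_2(s-t)),\, f_3(x_0-s)\bigr)\,d\mu(t)\,d\nu(s).$$
Applying the algebraic compatibility~\eqref{eq:assoc_assum} pointwise rewrites the integrand as $L_3\bigl(f_1(t), L_4(f_2(s-t), f_3(x_0-s))\bigr)$.

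Symmetrically, I would expand the right-hand side: with $k = f_2 *_{L_4,\nu} f_3$ and pulling the continuous linear map $y \mapsto L_3(f_1(t), y)$ inside its integral, $(f_1 *_{L_3,\mu} k)(x_0)$ becomes
$$\int_G\int_G L_3\bigl(f_1(t), L_4(f_2(r), f_3(x_0 - t - r))\bigr)\,d\nu(r)\,d\mu(t).$$
It therefore remains to identify these two double integrals.

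To do so I would apply Fubini's theorem to the left-hand expression, which currently has $\nu$ on the outside and $\mu$ on the inside. The joint integrability needed is exactly the third hypothesis: after relabeling the $\mu$-variable as $s$ and the $\nu$-variable as $t$, the left-hand integrand is $L_3(f_1(s), L_4(f_2(t-s), f_3(x_0-t)))$, whose $(\nu\times\mu)$-integrability is assumed. Fubini lets me bring the $\mu$-integration outside and the $\nu$-integration inside. For each fixed $t$ I then substitute $s = r + t$ in the inner $\nu$-integral: using $(r+t)-t = r$ and $x_0 - (r+t) = x_0 - t - r$ (valid even for nonabelian $G$, since $-(r+t) = -t-r$), the integrand becomes exactly $L_3(f_1(t), L_4(f_2(r), f_3(x_0 - t - r)))$, and the right-invariance of $\nu$ ensures the translation preserves the integral. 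The result is precisely the right-hand double integral.

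The main obstacle is the bookkeeping of the integrability side conditions rather than the algebra, which is essentially forced by~\eqref{eq:assoc_assum}. Commuting $L_2$ and $L_3$ with the Bochner integrals relies on the existence of the inner convolutions, while the swap of integration order rests entirely on the joint integrability supplied by the third hypothesis; keeping straight which integrand, in which variable order, and against which of $\mu$ and $\nu$ one is working is the delicate part. A secondary point of care is resisting any appeal to commutativity of $G$: the change of variables must be carried out using right-invariance of $\nu$ together with the identity $-(r+t)=-t-r$, checking that the right-translation $s\mapsto s-t$ lands the integrand in the correct form without ever commuting the group operation.
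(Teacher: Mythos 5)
Your proof is correct and follows essentially the same route as the paper's: pull the bilinear maps through the Bochner integrals (using existence of the inner convolutions), apply the compatibility identity~\eqref{eq:assoc_assum}, swap the integration order via Fubini using the third hypothesis, and finish with a right-translation change of variables justified by the right-invariance of $\nu$. The only difference — expanding both sides and meeting in the middle rather than transforming the left side into the right in a single chain — is cosmetic.
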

Note that the conditions are immediately satisfied
if all the $L_j$ are (scalar) multiplication,
and all the $f_i$ are integrable.
\begin{proof}
We compute
\begin{align*}
  &\mathrel{\hphantom{=}} \int L_2\big({\textstyle\int} L_1(f_1(s),f_2(t-s))d\mu(s),f_3(x_0-t)\big)d\nu(t)\\
  &= \iint L_2(L_1(f_1(s),f_2(t-s)),f_3(x_0-t))d\mu(s)d\nu(t) \\
  &= \iint L_3(f_1(s),L_4(f_2(t-s),f_3(x_0-t)))d\mu(s)d\nu(t) \\
  &= \iint L_3(f_1(s),L_4(f_2(t-s),f_3(x_0-t)))d\nu(t)d\mu(s) \\
  &= \int L_3\big(f_1(s),{\textstyle\int}L_4(f_2(t-s),f_3(x_0-t)d\nu(t))\big)d\mu(s)\\
  &= \int L_3\big(f_1(s),{\textstyle\int}L_4(f_2(t),f_3(x_0-s-t)d\nu(t))\big)d\mu(s).
\end{align*}
In the first and fourth equality we use linearity (of $L_2$ resp. $L_3$),
which requires that the appropriate convolutions exist.
In the second equality we use \eqref{eq:assoc_assum},
in the third equality we use Fubini's theorem (using the integrability condition),
and in the fifth equality we use that $\nu$ is right-invariant.
\end{proof}

\section{Integrability}
\label{sec:integrability}

If $f$ and $g$ are both integrable, we cannot conclude that
the convolution $f*_{L,\mu}g$ exists everywhere. However, we can get close.

\begin{proposition}
If $G$ is a measurable group, $\mu$ is $\upsigma$-finite and right-invariant,
and both $f$ and $g$ are integrable, then
\begin{itemize}
  \item The convolution $f*_{L,\mu}g$ exists at $\mu$-almost all $x$ in $G$;\clink{246}
  \item $f*_{L,\mu}g$ is integrable.\clink{443}
  \item Furthermore, if $\kk$ is either $\R$ or $\C$ and both $E$ and $E'$ are Banach spaces, then%
  \clink{787}
    $$\int (f*_{L,\mu}g)(x)d\mu(x)=
    L\left({\textstyle\int} f(x)d\mu(x),{\textstyle\int} g(x)d\mu(x)\right).$$
\end{itemize}
\end{proposition}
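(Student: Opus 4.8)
The plan is to reduce all three conclusions to a single statement on the product space $G\times G$: that the integrand $H(x,t)\vcentcolon= L(f(t),g(x-t))$ is $(\mu\times\mu)$-integrable. Once this is established, the first two bullets fall out of Fubini's theorem, and the third follows from a short computation combining Fubini with the fact that a continuous linear map commutes with the Bochner integral.

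First I would check that $H$ is $(\mu\times\mu)$-a.e.\ strongly measurable. As $f,g$ are integrable they are $\mu$-a.e.\ strongly measurable, and $L$ is continuous, so it is enough to see that $(x,t)\mapsto g(x-t)$ is $(\mu\times\mu)$-a.e.\ strongly measurable. The subtraction map $\sigma(x,t)=x-t$ is measurable since $G$ is a measurable group, and if $g$ coincides with a strongly measurable function off a $\mu$-null set $N$, then $g\circ\sigma$ coincides with the corresponding strongly measurable composite off $\sigma^{-1}(N)$, whose measure is $\int_G\mu(N+t)\,d\mu(t)=0$ by right-invariance. This is the first point where right-invariance is used.

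Next I would use the bound $\|H(x,t)\|\le\|L\|\,\|f(t)\|\,\|g(x-t)\|$, where $\|L\|$ is the norm of the continuous bilinear map $L$, and compute by Tonelli (valid since $\mu$, hence $\mu\times\mu$, is $\upsigma$-finite):
\begin{align*}
\int_G\!\int_G \|f(t)\|\,\|g(x-t)\|\,d\mu(x)\,d\mu(t)
&= \int_G \|f(t)\|\Big(\int_G\|g\|\,d\mu\Big)\,d\mu(t)\\
&= \Big(\int_G\|f\|\,d\mu\Big)\Big(\int_G\|g\|\,d\mu\Big)<\infty,
\end{align*}
the inner integral being independent of $t$ by right-invariance. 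Together with strong measurability this shows $H$ is $(\mu\times\mu)$-integrable. The integrability criterion of Fubini then gives that $t\mapsto H(x,t)$ is integrable for $\mu$-a.e.\ $x$, which is exactly the first bullet; and since Fubini also yields that $x\mapsto(f*_{L,\mu}g)(x)$ is a.e.\ strongly measurable with $\int_G\|(f*_{L,\mu}g)(x)\|\,d\mu(x)\le\int_G\!\int_G\|H\|\,d\mu(t)\,d\mu(x)<\infty$, the convolution is integrable, giving the second bullet.

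For the integral formula I would chain these tools, now using that $E,E'$ are Banach and $\kk\in\{\R,\C\}$ so that the $E$- and $E'$-valued integrals are defined. Starting from $\int_G(f*_{L,\mu}g)\,d\mu$, I swap the order of integration by Fubini, pull the continuous linear map $L(f(t),{-})$ through the inner integral, replace $\int_G g(x-t)\,d\mu(x)$ by $\int_G g\,d\mu$ using right-invariance, and finally pull the continuous linear map $L({-},\int_G g\,d\mu)$ through the outer integral to arrive at $L(\int_G f\,d\mu,\int_G g\,d\mu)$. The main obstacle I anticipate is the measurability bookkeeping of the first step: since $g$ is only a.e.\ strongly measurable, its pullback along $\sigma$ need not be well-behaved without a hypothesis on $\mu$, and it is precisely right-invariance that makes $\sigma$ preserve null sets. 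Everything afterwards is a routine application of Tonelli, Fubini, and the commutation of the integral with continuous (bi)linear maps.
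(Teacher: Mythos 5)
Your proposal is correct and follows essentially the same route as the paper: establish that $(x,t)\mapsto L(f(t),g(x-t))$ is $(\mu\times\mu)$-integrable using the norm bound $\|L\|\,\|f(t)\|\,\|g(x-t)\|$ together with right-invariance, then extract all three bullets via Fubini and the commutation of continuous (bi)linear maps with the integral. The only difference is presentational --- you invoke Tonelli on the nonnegative bound where the paper uses the iterated-integral criterion for product integrability, and you spell out the a.e.~strong measurability of the pulled-back integrand (which the paper merely asserts); both are sound.
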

\begin{proof}
We first prove that the map $(x,t)\mapsto L(f(t),g(x-t))$ is $(\mu\times\mu)$-integrable.
First note that this map is $(\mu\times\mu)$-a.e.~strongly measurable.
By Fubini's theorem, it is sufficient
to show that $$x\mapsto L(f(t),g(x-t))$$ is $\mu$-integrable for $\mu$-almost all $t$ and
that $$t\mapsto\int L(f(t),g(x-t))d\mu(x)$$ is $\mu$-integrable.

The first claim follows immediately from the integrability of $g$.
For the second claim, notice that
\begin{align*}
  &\mathrel{\hphantom{=}}\left\|\int L(f(t),g(x-t))d\mu(x)\right\|\\
  &\le \int \|L(f(t),g(x-t))\|d\mu(x)\\
  &\le \|L\|\cdot\|f(t)\| \cdot \int\! \|g(x-t)\|d\mu(x)\\
  &= \|L\|\cdot\|f(t)\| \cdot \int\! \|g(x)\|d\mu(x).
\end{align*}
We use the right-invariance of $\mu$ in the last equality.
This last expression is $\mu$-integrable in $t$ since $f$ is integrable, and therefore
$t\mapsto\int L(f(t),g(x-t))d\mu(x)$ is also $\mu$-integrable.

We can now apply the symmetric version of Fubini's theorem to conclude that
$$t\mapsto L(f(t),g(x-t))$$ is $\mu$-integrable for $\mu$-almost all $x$ and
that $$x\mapsto\int L(f(t),g(x-t))d\mu(t)$$ is $\mu$-integrable.

Finally, we can use Fubini's theorem one more time and use the bilinearity of $L$ twice
to calculate
\begin{align*}
  &\mathrel{\hphantom{=}}\iint L(f(t),g(x-t))d\mu(t)d\mu(x)\\
  &=\iint L(f(t),g(x-t))d\mu(x)d\mu(t)\\
  &=\int L\left(f(t),{\textstyle\int} g(x-t)d\mu(x)\right)d\mu(t)\\
  &=\int L\left(f(t),{\textstyle\int} g(x)d\mu(x)\right)d\mu(t)\\
  &=L\left({\textstyle\int} f(t)d\mu(t),{\textstyle\int} g(x)d\mu(x)\right).\qedhere
\end{align*}
\end{proof}

\section{Smoothness}%
\label{sec:smoothness}

One main property of the convolution is that it can smoothen a function.
That is to say, if one takes the convolution with a smooth function,
the convolution is smooth than the original function, even if the original function is
only locally integrable.
We will explore this behavior in this section.
In this section we will assume that $g$ is continuous/differentiable/$C^n$.
Using the commutativity of convolution one can show the same for the case where $f$ is smooth
(assuming that $G$ is abelian and $\mu$ left-invariant and negation invariant).

\subsection{Continuity and Differentiability}
\label{sub:continuity_diff}
In this paper, we denote the \emph{support}\,\mllink{algebra/support.lean\#L30} of a function $f$ as
$$\supp(f):=\{ x \mid f(x) \ne 0\}.$$
This notion should be contrasted
with the \emph{topological support}\,\mllink{topology/support.lean\#L44} of a function $f$,
which is the closure of $\supp(f)$:
$$\tsupp(f):=\overline{\supp(f)}.$$

We say that a function \emph{has compact support}\,\mllink{topology/support.lean\#L111} if its
topological support is compact.
\begin{proposition}
  \label{prop:continuous_conv}
  Suppose that $G$ is a second-countable locally compact Hausdorff topological group.
  If $f$ is locally $\mu$-integrable and $g$ is continuous with compact support
  then $f*_{L,\mu}g$ is continuous.~\clink{462}
\end{proposition}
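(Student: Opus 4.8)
The plan is to apply the parametric-integral continuity result, \Cref{prop:int_continuous}, with parameter space $Z = G$ and integrand $F(x,t) = L(f(t),g(x-t))$, and to establish continuity of $x \mapsto \int_G F(x,t)\,d\mu(t)$ at an arbitrary point $x_0 \in G$. Since $G$ is second-countable it is first-countable, so the hypothesis on $Z$ is satisfied; as $x_0$ is arbitrary, continuity at each $x_0$ gives continuity of $f *_{L,\mu} g$ everywhere.

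I would then verify the three hypotheses of \Cref{prop:int_continuous} in turn. For the a.e.~strong measurability of $F(x,{-})$: local integrability makes $f$ a.e.~strongly measurable, the map $t \mapsto g(x-t)$ is continuous (hence measurable) as a composition of the continuous subtraction with the continuous $g$, and $L$ is a continuous bilinear map, so $t \mapsto L(f(t),g(x-t))$ is a.e.~strongly measurable. For continuity of $F({-},t)$ at fixed $t$: the map $x \mapsto L(f(t),g(x-t))$ is the composite of the continuous translation $x \mapsto x-t$, the continuous $g$, and the continuous linear map $L(f(t),{-})$, hence continuous for every $t$.

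The crux is the construction of the integrable bounding function $B$, and this is where both hypotheses genuinely enter. Let $K := \tsupp(g)$, which is compact, and choose a compact neighbourhood $V$ of $x_0$ using local compactness of $G$. The set $V - K$ is compact, being the image of the compact set $V \times K$ under continuous subtraction. For $x \in V$ the integrand $F(x,t)$ vanishes whenever $t \notin V - K$: indeed $g(x-t) \neq 0$ forces $x - t \in K$, i.e.\ $t \in x - K \subseteq V - K$. Writing $M := \sup_{y} \|g(y)\| < \infty$ (finite since $g$ is continuous with compact support) and using $\|L(f(t),g(x-t))\| \le \|L\|\,\|f(t)\|\,\|g(x-t)\|$, I set
$$B(t) := \|L\|\,M\,\|f(t)\|\cdot \mathbf{1}_{V-K}(t),$$
so that $\|F(x,t)\| \le B(t)$ for all $x \in V$ and all $t$. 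Because $f$ is locally $\mu$-integrable and $V - K$ is compact, $\int_G B\,d\mu = \|L\|\,M \int_{V-K}\|f\|\,d\mu < \infty$, so $B$ is integrable. With all three hypotheses in place, \Cref{prop:int_continuous} yields continuity at $x_0$.

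The main obstacle is precisely this localization step: one must exploit that the compact support of $g$ confines the $t$-support of the integrand, uniformly over $x$ in a compact neighbourhood of $x_0$, to the single compact set $V-K$, on which local integrability of $f$ supplies the required finite bound. Everything else reduces to routine verifications of measurability and pointwise continuity.
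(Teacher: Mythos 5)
Your proposal is correct and follows essentially the same route as the paper's own proof: apply \Cref{prop:int_continuous} at an arbitrary $x_0$, get a.e.~strong measurability of the integrand from the a.e.~strong measurability of $f$ together with continuity of $g$ and $L$, get pointwise continuity in $x$ by composition, and obtain the integrable bound $\|L\|\cdot\sup\|g\|\cdot\|f(t)\|$ localized to a compact set built from a compact neighbourhood of $x_0$ and $\tsupp(g)$, where local integrability of $f$ applies. The only cosmetic slip is that for a possibly nonabelian $G$ the relation $x-t\in\tsupp(g)$ gives $t\in-\tsupp(g)+V$ rather than $t\in V-\tsupp(g)$ (the paper uses the former set, $K'=-\tsupp(g)+K$), but this set is equally compact, so the argument is unaffected.
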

\begin{proof}
  To show that $f*_{L,\mu}g$ is continuous at $x_0$, we use \Cref{prop:int_continuous}.
  Clearly, $x\mapsto L(f(t),g(x-t))$ is continuous.

  To show that the map $t\mapsto L(f(t),g(x-t))$ is $\mu$-a.e.~strongly measurable,
  note that $f$ is $\mu$-a.e.~strongly measurable,
  since we can write $G$ as the union of a countable collection of compact sets, and $f$ is
  $\mu$-a.e.~strongly measurable on each of those sets, hence on all of $G$.
  Furthermore, $L$ and $t\mapsto g(x-t)$ are continuous, hence $\mu$-a.e.~strongly measurable, so
  the composition is also $\mu$-a.e.~strongly measurable.

  To provide the integrable bound,
  let $K$ be a compact neighborhood of $x_0$ and let $K'=-\tsupp(g)+K$, which is also compact.
  For $x\in K$ and any $t\in G$ we have
  \begin{align*}
    &\mathrel{\hphantom{=}} |L(f(t),g(x-t))\| \\
    &\le\|L\|\cdot\|f(t)\|\cdot\|g(x-t)\|\\
    &\le \begin{cases} \|L\|\cdot\|f(t)\|\cdot\sup\|g\| & \text{if $t \in K'$}\\
      0 &\text{otherwise} \end{cases}
  \end{align*}
  Since $f$ is integrable on $K'$, this last function is integrable.
\end{proof}

We have to be careful with the above statement:
our definition of integration includes the integral is defined to be $0$
if the integrand is not integrable. So we have to separately prove that the convolution
really exists. In this case, that is indeed true, since the integrand is a
$\mu$-a.e.~strongly measurable function that is bounded by an integrable function,
so it is itself integrable.

We also prove a variant of \Cref{prop:continuous_conv} that with a stronger condition on $f$,
namely that that $f$ is $\mu$-integrable, but weaker conditions on $g$, namely that
$g$ is continuous and $\|g\|$ is bounded.\clink{485}

For the remainder of this section, assume that $\kk=\R$ or $\kk=\C$,
that $G$ is a finite-dimensional normed space over $\kk$ and that
$\mu$ is $\upsigma$-finite and left-invariant.

As in the discussion in \Cref{sec:convolution}, we have to define a new bilinear map to state
what the derivative of a convolution is.
In this case, we define the continuous linear map $$L^G:E\times L(G, E') \to L(G,F)$$ as
$$L^G(x,f)=L(x,{-})\circ f.$$
We can now write down the derivative of a convolution elegantly.

\begin{proposition}
  \label{prop:derivative_conv}
  If $f$ is locally $\mu$-integrable and $g$ is $C^1$ with compact support
  then $f*_{L,\mu}g$ has derivative $f*_{L^G,\mu}Dg$.~\clink{841}
\end{proposition}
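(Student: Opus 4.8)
The plan is to apply \Cref{prop:int_deriv} (differentiation under the integral sign) at an arbitrary base point $x_0\in G$, with $G$ itself playing the role of the parameter space and the Banach space $F$ as codomain. Concretely, I would set
$$F(x,t)=L(f(t),g(x-t)),\qquad F'(x,t)=L^G(f(t),Dg(x-t)),$$
so that $\int F(x,t)\,d\mu(t)=(f*_{L,\mu}g)(x)$ and $\int F'(x_0,t)\,d\mu(t)=(f*_{L^G,\mu}Dg)(x_0)$. If the five hypotheses of \Cref{prop:int_deriv} can be verified at $x_0$, its conclusion states precisely that $f*_{L,\mu}g$ has derivative $(f*_{L^G,\mu}Dg)(x_0)$ at $x_0$; since $x_0$ is arbitrary, this gives the proposition.

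The identification of $F'$ as the pointwise parameter-derivative of $F$ is the conceptual core. For fixed $t$, the map $x\mapsto g(x-t)$ is $g$ composed with the translation $x\mapsto x-t$, whose derivative is the identity, so (as $g$ is $C^1$) it has derivative $Dg(x-t)$. Postcomposing with the continuous linear map $L(f(t),{-})$ and applying the chain rule shows that $x\mapsto L(f(t),g(x-t))$ has derivative $L(f(t),{-})\circ Dg(x-t)=L^G(f(t),Dg(x-t))=F'(x,t)$, for every $t$ and every $x$. This dispatches the derivative hypothesis.

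The measurability and the integrability-at-$x_0$ hypotheses I would handle exactly as in the proof of \Cref{prop:continuous_conv}: writing $G$ as a countable union of compacta shows $f$ is $\mu$-a.e.\ strongly measurable, and since $g$, $Dg$, $L$ and $L^G$ are continuous, both $t\mapsto F(x,t)$ and $t\mapsto F'(x_0,t)$ are $\mu$-a.e.\ strongly measurable. Integrability of $F(x_0,{-})$ follows from the compact-support bound $\|F(x_0,t)\|\le\|L\|\cdot\|f(t)\|\cdot\sup\|g\|$, whose support lies in the compact set $x_0-\tsupp(g)$, together with the local integrability of $f$.

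The step I expect to require the most care is the dominating bound on $F'$, which must hold uniformly as $x$ ranges over a neighborhood of $x_0$. The key observation is that $g\equiv 0$ off the open set $G\setminus\tsupp(g)$, hence $Dg\equiv 0$ there as well, so $\supp(Dg)\subseteq\tsupp(g)$ and $\tsupp(Dg)$ is compact. Taking a compact neighborhood $K$ of $x_0$ and setting $K'=K-\tsupp(Dg)$, which is compact, one has for $x\in K$ that $Dg(x-t)=0$ unless $t\in K'$, so
$$\|F'(x,t)\|\le\begin{cases}\|L^G\|\cdot\|f(t)\|\cdot\sup\|Dg\| & \text{if } t\in K',\\ 0 & \text{otherwise,}\end{cases}$$
and the right-hand side is integrable since $f$ is integrable on the compact set $K'$. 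This single integrable bound dominates $F'(x,{-})$ for all $x$ near $x_0$, as \Cref{prop:int_deriv} requires. Finally, as in the remark after \Cref{prop:continuous_conv}, the integrability of this bound also guarantees that $f*_{L^G,\mu}Dg$ genuinely exists rather than defaulting to $0$.
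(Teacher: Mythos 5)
Your proposal is correct and takes essentially the same route as the paper: the same application of \Cref{prop:int_deriv} with $F(x,t)=L(f(t),g(x-t))$ and $F'(x,t)=L^G(f(t),D_{x-t}g)$, the same chain-rule identification of $F'$ as the parameter-derivative, and the same measurability and compact-support domination arguments carried over from \Cref{prop:continuous_conv}. The only difference is that you spell out the dominating bound in detail (via $\supp(Dg)\subseteq\tsupp(g)$ and $K'=K-\tsupp(Dg)$), whereas the paper simply defers this step to the earlier proof.
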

\begin{proof}
We use \Cref{prop:int_deriv} with $F(x,t):=L(f(t),g(x-t))$ and $F'(x,t):=L^G(f(t),D_{x-t}g)$.
We already showed that $F(x,{-})$ is integrable and hence $\mu$-a.e.~strongly measurable.
Since $Dg$ is continuous, $F'(x,{-})$ is also $\mu$-a.e.~strongly measurable.
The reason that the derivative is dominated by an integrable function
is the same as in the proof of \Cref{prop:continuous_conv}.
Finally, to compute the derivative of $F$ we use the chain rule:
\begin{align*}
  D_xF({-},t)&= D_{g(x-t)}L(f(t),{-})\circ D_x(x \mapsto g(x-t))\\
  &= L(f(t),{-})\circ D_{x-t}g\\
  &= L^G(f(t),D_{x-t}g) = F'(x,t).\qedhere
\end{align*}
\end{proof}

From this, we get the appropriate smoothness condition for convolutions.
\begin{proposition}
  \label{prop:Cn_conv}
  If $f$ is locally $\mu$-integrable and $g$ is $C^n$ with compact support
  (for $n \in \N\cup\{\infty\}$)
  then $f*_{L,\mu}g$ is $C^n$.~\clink{878}
\end{proposition}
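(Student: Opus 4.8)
The plan is to induct on $n$, using \Cref{prop:continuous_conv} for the base case and \Cref{prop:derivative_conv} for the inductive step, and then to deduce the case $n=\infty$ separately. A crucial preliminary point is that the induction must be set up with the statement quantified over \emph{all} choices of the normed space $E'$, the Banach space $F$, the bilinear map $L$, and the function $g$ (keeping $E$ and the locally integrable function $f$ fixed). The reason is that differentiating a convolution replaces $L$ by the new bilinear map $L^G\colon E\times L(G,E')\to L(G,F)$ and replaces the codomain $E'$ by $L(G,E')$, so the derivative is a convolution against a \emph{different} bilinear map on \emph{different} spaces; a naive induction that fixes $L$ would not allow the inductive hypothesis to be applied to the derivative.

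For the base case $n=0$, being $C^0$ is by definition being continuous, so the claim is exactly \Cref{prop:continuous_conv} (a finite-dimensional normed space over $\kk$ is indeed a second-countable locally compact Hausdorff topological group, so its hypotheses apply).

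For the inductive step, assume the statement for $n$, uniformly in the target spaces and the bilinear map, and let $g$ be $C^{n+1}$ with compact support. By the defining property of $C^{n+1}$ it suffices to show that $f*_{L,\mu}g$ is differentiable and that its Fréchet derivative is $C^n$. Differentiability is immediate from \Cref{prop:derivative_conv}, applied with the given $L$ and $g$ (which is in particular $C^1$ with compact support), and that proposition moreover identifies the derivative as the convolution $f*_{L^G,\mu}Dg$. Since $g$ is $C^{n+1}$, its derivative $Dg\colon G\to L(G,E')$ is $C^n$; and since $Dg$ vanishes wherever $g$ is locally constant, we have $\tsupp(Dg)\subseteq\tsupp(g)$, so $Dg$ again has compact support. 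Applying the inductive hypothesis to $f$, to the $C^n$ compactly supported function $Dg$, and to the bilinear map $L^G\colon E\times L(G,E')\to L(G,F)$ shows that $f*_{L^G,\mu}Dg$ is $C^n$, which completes the step. Finally, for $n=\infty$ we use that $f*_{L,\mu}g$ is $C^\infty$ iff it is $C^n$ for every finite $n$; since $g$ is $C^\infty$ it is $C^n$ for each finite $n$, so the finite case just proved applies at each level.

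The main obstacle is precisely the generalization of the induction hypothesis over the bilinear map and the target spaces described above; once this is in place, the three ingredients — the base case, the differentiation formula, and the support inclusion $\tsupp(Dg)\subseteq\tsupp(g)$ — combine routinely, and no further analytic estimates beyond those already used in \Cref{prop:continuous_conv} and \Cref{prop:derivative_conv} are needed.
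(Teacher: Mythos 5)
Your proof is mathematically correct and shares the paper's skeleton (induction on $n$, base case from \Cref{prop:continuous_conv}, inductive step via \Cref{prop:derivative_conv}, and the $n=\infty$ case by reduction to finite $n$), but it closes the inductive step by a genuinely different device, and the difference is precisely what \Cref{sub:universes} of the paper is about. You generalize the induction hypothesis over $E'$, $F$ and $L$ so that it can be applied to the derivative $f*_{L^G,\mu}Dg$, whose codomain is $L(G,F)$; this is the classical argument and is unproblematic in set theory or any foundation with cumulative universes. In Lean, however, it breaks: if $G:\U_a$ and $E':\U_c$, then $L(G,E'):\U_{\max(a,c)}$, so the spaces produced by differentiation need not live in the universes quantified over in the induction hypothesis; repairing this requires either restricting the universe levels or transporting along $\up\colon F\to\ulift F$, with many auxiliary lemmas showing $\up$ commutes with the relevant structure --- the paper explicitly considers this route and rejects it as too cumbersome. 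Instead, the paper uses a characterization of $C^{n+1}$ valid for finite-dimensional domains (the standing assumption in that section): $h$ is $C^{n+1}$ iff $h$ is differentiable and, for every vector $v$, the directional derivative $x\mapsto D_xh(v)$ is $C^n$. Combined with the computation $D_x(f*_{L,\mu}g)(v)=(f*_{L,\mu}D_{({-})}g(v))(x)$, the induction hypothesis is then applied with the \emph{same} $E'$, $F$ and $L$, replacing only $g$ by $D_{({-})}g(v)$, which has the same domain and codomain as $g$ (and, as you note for $Dg$, compact support), so no new types appear and the universe problem vanishes. Your route buys foundational generality and independence from the finite-dimensional-only characterization; the paper's route buys a short, clean formal proof that sidesteps Lean's lack of universe cumulativity.
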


\subsection{A Digression on Universes}
\label{sub:universes}
Before we delve into the proof of \Cref{prop:Cn_conv},
we need to discuss a foundational obstacle in Lean
when doing inductive proofs of this form.
The idea of the this proof is by performing induction on $n$.
In the successor case, one can use the equivalence
\begin{lstlisting}
cont_diff 𝕜 (n + 1) f ↔ differentiable 𝕜 f ∧
  cont_diff 𝕜 n (fderiv 𝕜 f)
\end{lstlisting}
To show that $f*_{L,\mu}g$ is differentiable we use \Cref{prop:derivative_conv}
and we use the induction principle to show that $D(f*_{L,\mu}g)$ is $C^n$.
The problem is that $f*_{L,\mu}g$ has type $G \to F$ and $D(f*_{L,\mu}g)$ has type $G \to L(G,F)$.
The types $G$ and $F$ live in certain universes, say $\U_a$ resp. $\U_b$
(this would be denoted \lstinline{Type a} and \lstinline{Type b} in Lean).
In \textsf{mathlib}, we want to be maximally general,
so we want to allow for the possibility that $G$ and $F$ live in different universes,
so we don't want to assume that $a$ and $b$ are the same.
Now the catch is that $L(G,F)$ has type $\U_{\max(a,b)}$.
Since Lean does not have cumulative universes~\cite{vandoorn2018thesis},
we cannot get $E$ to have type $\U_{\max(a,b)}$ automatically.
This is a problem for the induction:
the universe levels in the base case have to be the same as the universe levels
in the induction hypothesis.
A similar problem occurs with the derivative of $g:G\to E'$,
since $Dg$ has type $G \to L(G,E')$, which lives in a different universe than $E'$.

To fix the induction argument, one has to assume that the universe levels of $E'$ and $F$ are larger
than the universe level of $G$. One can do that by assuming that
$G:\U_a$, $E:\U_b$, $E':\U_{\max(a,c)}$ and $F:\U_{\max(a,d)}$. In this case, the type of $E'$ and
$L(G,E)$ are the same, namely $\U_{\max(a,c)}$.
Similarly, the types of $F$ and $L(G,F)$ are the same, and the induction goes through.
To get the fully general version of the lemma without universe restrictions, one can use
\emph{universe lifting} $\ulift:\U_a\to\U_{\max(a,b)}$.
$\ulift(F)$ is a copy of type $F$ in a higher universe level,
which comes with a bijection $\up:F\to\ulift F$.
We can apply the version of the lemma with restricted universes to $f*_{\up\circ L,\mu}(\up\circ g)$
to derive it for $f*_{L,\mu}g$. However, this is unpleasant to do, since it requires a lot of lemmas
stating that the map $\up$ commutes with various operations and preserves various properties (for example,
to write the aforementioned convolution, we need to know that $\up$ is a continuous linear map).

This proof works, but it is not convenient to give, especially since these induction proofs are very
common to prove that some operation respects $C^n$ function.
Note that with universe cumulativity this problem disappears: if $F:\U_a$ then $F$ also has type
$\U_{\max(a,b)}$.
Usually the lack of universe cumulativity is completely invisible to the user,
and occasionally it is a very minor inconvenience
(for example, when one has to write down universe levels explicitly).
This is the only issue that the author of this paper has encountered where the lack of
universe cumulativity is annoying.\footnote{But it should be noted that
there are also issues with universe cumulativity~\cite{luo2012universes}.}

However, for our current purpose there is a much simpler proof
that avoids all of these universe issues.

We have proven a different version of the characterization of being $C^{n+1}$
(in collaboration with \anon[{[redacted]}]{Patrick Massot} and \anon[{[redacted]}]{Oliver Nash}),
namely\mllink{analysis/calculus/cont_diff.lean\#L3109}
\begin{lstlisting}
cont_diff 𝕜 (n + 1) f ↔ differentiable 𝕜 f ∧
  ∀ v, cont_diff 𝕜 n (λ x, fderiv 𝕜 f x v)
\end{lstlisting}
However, we have proven this only in the case that the domain of $f$ is finite dimensional.
The difference with the previous formulation is that instead of stating that $Df$ is $C^n$,
viewed as a map into the space of continuous linear maps, we look at all directional derivatives,
and require that $x\mapsto D_xf(v)$ (denoted $D_{({-})}f(v)$) is smooth for all vectors $v$.

The reason that this avoids the universe issue mentioned above
is that the map $D_{({-})}f(v)$ has the same domain and codomain as $f$,
and therefore the induction works without any universe restrictions.

The proof of this characterization of being $C^{n+1}$ is not very difficult
if the domain is finite dimensional,
since we can reconstruct a linear map from its values on set of vectors forming a basis,
and this reconstructed linear map is $C^n$ iff its values on the basis are $C^n$.
However, this characterization of being $C^{n+1}$ is false when the domain of
$f$ is not finite-dimensional, so
we cannot use this property for all proofs.
In particular, the proof that the composition of $C^n$ functions is $C^n$
(done by \anon[{[redacted]}]{Sébastien Gouëzel})
still uses a cumbersome proof that has to work with universe lifts.

\subsection{Proving Smoothness}
\label{sub:smoothness}
In \Cref{prop:Cn_conv}, we are assuming that the domain $G$ of all functions involved is
finite dimensional, so we can use the the characterization of being $C^{n+1}$
described in \Cref{sub:universes}.

\begin{proof}[Proof of \Cref{prop:Cn_conv}]
We perform induction on $n$. For $n=0$ it follows directly from \Cref{prop:continuous_conv}.
If $n=k+1$, then we know that $g$ is differentiable,
so $f*_{L,\mu}g$ is differentiable by \Cref{prop:derivative_conv}.
We can also compute
\begin{align*}
  D_x(f*_{L,\mu}g)(v)&=(f*_{L^G,\mu}Dg)(x)(v)\\
  &=\left(\int L^G(f(t),D_{x-t}g)d\mu(t)\right)(v)\\
  &=\int L(f(t),D_{x-t}g(v))\circ d\mu(t)\\
  &=(f*_{L,\mu}D_{({-})}g(v))(x).
\end{align*}
Note that $D_{({-})}g(v)$ has compact support and is $C^n$, so
we can apply the induction hypothesis to $f$ and $D_{({-})}g(v)$.
We conclude that $f*_{L,\mu}g$ is $C^{n+1}$.

We automatically get the case $n=\infty$,
since being $C^\infty$ is equivalent to being $C^n$ for all $n$.
\end{proof}

\section{Approximations of Functions}%
\label{sec:approximations}

In this section we will discuss how the convolution can be used to approximate functions.
If $x$ is a point in a metric space, we write
$$\ball_R(x)=\{ y \mid d(x,y)<R\}$$
for the open ball around $x$ with radius $R$.

In this section, let $G$ be a second-countable seminormed group, $\mu$ a left-invariant
$\upsigma$-finite measure, $\kk=\R$ and $E$ be a Banach space.
\begin{lemma}
  \label{lem:conv_dist}
  Let $x_0\in G$ and assume that
  $f$ is $\mu$-integrable with $\supp(f)\subseteq \ball_R(0)$.
  Suppose that $g$ is strongly $\mu$-a.e.~measurable
  such that for all $x \in \ball_R(x_0)$ we have $d(g(x),g(x_0))\le \eps$ for some $\eps\ge0$.
  Then the distance between $(f*_{L,\mu}g)(x_0)$ and $\int_GL(f(t),g(x_0))d\mu(t)$
  is at most $$\eps\|L\|\cdot\int_G\!\|f(x)\|d\mu(x).\clink{610}$$
\end{lemma}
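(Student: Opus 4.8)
The plan is to rewrite the difference of the two integrals as a single integral and bound it pointwise. Writing out the definition, $(f*_{L,\mu}g)(x_0)=\int_G L(f(t),g(x_0-t))\,d\mu(t)$, so that—once we know the relevant integrands are integrable—bilinearity of $L$ together with linearity of the Bochner integral gives
\[
  (f*_{L,\mu}g)(x_0)-\int_G L(f(t),g(x_0))\,d\mu(t)
  =\int_G L\big(f(t),\,g(x_0-t)-g(x_0)\big)\,d\mu(t).
\]
The task then reduces to bounding the norm of the right-hand side by $\eps\|L\|\int_G\|f(t)\|\,d\mu(t)$.

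The geometric heart of the argument is that the hypothesis $\supp(f)\subseteq\ball_R(0)$ confines the integrand to the region where $g$ is controlled. Indeed, if $f(t)\ne 0$ then $\|t\|<R$, and since $G$ is an abelian seminormed group we have $d(x_0-t,x_0)=\|{-t}\|=\|t\|<R$, so $x_0-t\in\ball_R(x_0)$; the hypothesis on $g$ then yields $\|g(x_0-t)-g(x_0)\|\le\eps$. Combining this with the operator-norm bound for the continuous bilinear map $L$ produces the pointwise estimate
\[
  \big\|L\big(f(t),g(x_0-t)-g(x_0)\big)\big\|
  \le \|L\|\cdot\|f(t)\|\cdot\|g(x_0-t)-g(x_0)\|
  \le \eps\,\|L\|\,\|f(t)\|,
\]
which is valid for \emph{every} $t$: when $t\in\supp(f)$ it follows from the estimate on $g$, and when $f(t)=0$ both sides vanish. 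Integrating this bound and applying the triangle inequality $\|\int h\|\le\int\|h\|$ for the Bochner integral yields exactly $\eps\|L\|\int_G\|f(t)\|\,d\mu(t)$.

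What remains—and what I expect to be the main obstacle—is the integrability bookkeeping that justifies splitting the integral in the first display. The same pointwise estimate shows $t\mapsto L(f(t),g(x_0-t)-g(x_0))$ is dominated by the integrable function $\eps\|L\|\,\|f\|$, while $t\mapsto L(f(t),g(x_0))$ is dominated by $\|L\|\,\|g(x_0)\|\,\|f\|$; hence both are integrable once they are shown to be $\mu$-a.e.\ strongly measurable, and integrability of the convolution integrand $t\mapsto L(f(t),g(x_0-t))$ then follows as the sum. This last point is essential, not cosmetic: without it $(f*_{L,\mu}g)(x_0)$ would be $0$ by our convention and the claimed bound could fail. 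The delicate step is the $\mu$-a.e.\ strong measurability of the shifted map $t\mapsto g(x_0-t)$: although $f$ is integrable and hence a.e.\ strongly measurable, and $g$ is assumed a.e.\ strongly measurable, transporting a.e.\ strong measurability through the subtraction $t\mapsto x_0-t$ has to be handled via the measurability of the group operations and the invariance of $\mu$. I would isolate this as a measurability lemma for the subtraction map combined with the left-invariance of $\mu$; granting it, the rest of the argument is the routine pointwise-domination computation sketched above.
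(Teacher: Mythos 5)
Your proposal is correct and follows essentially the same route as the paper's proof: the key step in both is the pointwise estimate $\|L(f(t),g(x_0-t)-g(x_0))\|\le\eps\|L\|\,\|f(t)\|$, valid for all $t$ by the case split between $t\in\supp(f)\subseteq\ball_R(0)$ (where the hypothesis on $g$ applies since $x_0-t\in\ball_R(x_0)$) and $f(t)=0$ (where both sides vanish), followed by integrating the bound. The only cosmetic difference is that you rewrite the difference as a single integral via bilinearity while the paper bounds the distance between the two integrals directly; you are in fact more explicit than the paper about the integrability and a.e.~strong measurability bookkeeping, which the paper dismisses with ``we omit the details.''
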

\begin{proof}
  First note that the convolution $f*_{L,\mu}g$ exists at $x_0$, since $f$ is integrable on
  $\ball_R(0)$ and $g$ is bounded on $\ball_R(x_0)$ (we omit the details).
  Note that for all $t\in G$ we have
  \begin{equation}
    d(L(f(t),g(x_0-t)),L(f(t),g(x_0)))\le \eps\|L(f(t),{-})\|.\label{eq:dist_ineq}
  \end{equation}
  Indeed, if $t\in\supp(f)\subseteq \ball_R(0)$ then $d(g(x),g(x_0))\le \eps$,
  hence \eqref{eq:dist_ineq} holds. Otherwise the inequality is also valid,
  since both sides are equal to $0$. Therefore,
  \begin{align*}
    &\mathrel{\hphantom{=}}
      d\left((f*_{L,\mu}g)(x_0), {\textstyle\int_G}L(f(t),g(x_0))d\mu(t)\right)\\
    &\le \int_G \eps\|L(f(t),{-})\|d\mu(t)\\
    &\le \int_G \eps\|L\|\cdot\|f(t)\|d\mu(t)\\
    &= \eps\|L\|\cdot\int_G\!\|f(t)\|d\mu(t)
  \end{align*}
  as desired.
\end{proof}

Write $\varphi*_{\mu}g$ for the convolution when the bilinear map is scalar multiplication.

Given a filter $M$ on $X$ we can define the filter ``small sets w.r.t. $M$''
on the power set $\mathcal{P}(X)$,
defined as the largest filter containing $\mathcal{P}(A)$
for $A\in M$.\mllink{order/filter/small_sets.lean\#L32}

The defining property is that if $f:Y\to \mathcal{P}(X)$ is a family of sets
and $N$ is a filter on $Y$, then
$f(x)$ tendsto the small sets w.r.t. $M$ as $x$ tends to $N$
if and only if for all $A\in M$ we have\mllink{order/filter/small_sets.lean\#L46}
$$\forallf{y}{N}, f(x)\subseteq A.$$

\begin{proposition}
  \label{prop:conv_tendsto}
  Let $M$ be a filter on some indexing type $I$
  and let $(\varphi_i)_{i\in I}$ be a sequence of nonnegative functions $G\to\R$ with integral 1.
  Suppose that $\supp(\varphi_i)$ tends to small sets w.r.t. $\nhds_0$ as $i$ tends to $M$.
  If $g$ is a strongly $\mu$-a.e.~measurable function continuous at $x_0\in G$, then
  $i\mapsto (\varphi_i*_{\mu}g)(x_0)$ tends to $\nhds_{g(x_0)}$ as $i$ tends to $M$.
  \clink{677}
\end{proposition}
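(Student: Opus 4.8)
The plan is to peel the filter convergence down to a metric estimate and then feed it straight into \Cref{lem:conv_dist}, which already carries all of the quantitative work. Since $E$ is a Banach space, the neighborhood filter $\nhds_{g(x_0)}$ has a basis of balls $\ball_\eps(g(x_0))$, so proving that $i\mapsto(\varphi_i*_\mu g)(x_0)$ tends to $\nhds_{g(x_0)}$ as $i$ tends to $M$ amounts to showing that for every $\eps>0$ we have
$$\forallf{i}{M},\ d\big((\varphi_i*_\mu g)(x_0),\,g(x_0)\big)\le\eps.$$

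First I would fix $\eps>0$ and use the continuity of $g$ at $x_0$ to obtain a radius $R>0$ with $d(g(x),g(x_0))\le\eps$ for all $x\in\ball_R(x_0)$. Since $\ball_R(0)$ is a neighborhood of $0$, the hypothesis that $\supp(\varphi_i)$ tends to the small sets w.r.t. $\nhds_0$ gives, by its defining property applied to the set $A=\ball_R(0)\in\nhds_0$, that $\forallf{i}{M},\ \supp(\varphi_i)\subseteq\ball_R(0)$. Hence, eventually along $M$, the two hypotheses of \Cref{lem:conv_dist} hold simultaneously with this same $R$: the support of $\varphi_i$ sits inside $\ball_R(0)$, and $g$ is $\eps$-close to $g(x_0)$ on $\ball_R(x_0)$.

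For such an $i$ I would apply \Cref{lem:conv_dist} with $f=\varphi_i$ and $L$ equal to scalar multiplication $\R\times E\to E$. Here $\varphi_i$ is integrable, since a nonnegative function whose Bochner integral equals $1$ cannot be the non-integrable case (whose integral is defined to be $0$). The bilinear map $L$ has norm $\|L\|=1$, and $\int_G\|\varphi_i(x)\|d\mu(x)=\int_G\varphi_i(x)d\mu(x)=1$ by nonnegativity. Moreover the reference point in the lemma evaluates to
$$\int_GL(\varphi_i(t),g(x_0))d\mu(t)=\Big(\int_G\varphi_i(t)d\mu(t)\Big)g(x_0)=g(x_0),$$
because $g(x_0)$ is a constant vector. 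Thus \Cref{lem:conv_dist} yields $d\big((\varphi_i*_\mu g)(x_0),g(x_0)\big)\le\eps\cdot1\cdot1=\eps$, which is exactly the displayed estimate, and the convergence follows.

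The genuinely hard estimate — bounding the convolution against the constant $g(x_0)$ by a multiple of $\eps$ — has already been isolated in \Cref{lem:conv_dist}, so the only real content here is unwinding the definition of convergence to the small-sets filter into the eventual containment $\supp(\varphi_i)\subseteq\ball_R(0)$ and matching the radius $R$ coming from continuity with the radius controlling the supports. I expect the main obstacle to be bookkeeping rather than mathematics: confirming integrability of $\varphi_i$ from nonnegativity plus unit integral, recording that $\|L\|=1$ for scalar multiplication, and pulling the constant $g(x_0)$ out of the integral, all of which become routine once the filter translation is in place.
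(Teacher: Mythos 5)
Your proposal is correct and follows essentially the same route as the paper's own proof: translate the filter convergence into an eventual metric estimate, use continuity of $g$ at $x_0$ to pick the radius, feed the eventual support containment from the small-sets hypothesis into \Cref{lem:conv_dist} with $L$ being scalar multiplication (of norm $1$), and pull the constant $g(x_0)$ out of the integral, with integrability of $\varphi_i$ deduced from its integral being nonzero. The only cosmetic difference is that the paper works with $\eps/2$ to get a strict inequality, whereas you use the equivalent closed-ball ($\le\eps$) characterization of convergence, which avoids the factor of two.
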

\begin{proof}
We have to show for all $\eps>0$ that
$$\forallf{i}{M}, d((\varphi_i*_{\mu}g)(x_0),g(x_0))<\eps.$$
Since $g$ is continuous at $x_0$, we find a $\delta>0$ such that $d(g(x),g(x_0))<\frac\eps2$
for $x\in\ball_\delta(x_0)$.
By the condition on $\supp(\varphi_i)$ we know that
$$\forallf{i}{M}, \supp(\varphi_i)\subseteq\ball_\delta(0).$$
If $i\in I$ such that $\supp(\varphi_i)\subseteq\ball_\delta(0)$ then
it suffices to show that $d((\varphi_i*_{\mu}g)(x_0),g(x_0))<\eps$.
We apply \Cref{lem:conv_dist} (noting that $\varphi_i$ is integrable, since its integral is nonzero)
to conclude that
$$d\left((\varphi_i*_\mu g)(x_0), {\textstyle\int_G}\varphi_i(t)g(x_0)d\mu(t)\right)
\le\frac\eps2<\eps,$$
using that the norm of scalar multiplication is 1.
We obtain the desired result by noting that
\begin{equation*}
  \int_G\varphi_i(t)g(x_0)d\mu(t)=g(x_0)\int_G\varphi_i(t)d\mu(t)=g(x_0).\qedhere
\end{equation*}
\end{proof}

In the formalization the of the proof of \Cref{prop:conv_tendsto} is a combination of two lemmas.
The combined length of the proofs of these lemmas is 14 lines long,
showing that \textsf{mathlib} provides a useful library for topology and calculus,
where we can efficiently prove tricky arguments.

\Cref{prop:conv_tendsto} is particularly useful when combined with
\anon[{[redacted]}]{Yury Kudryashov}'s work on defining smooth bump functions $E\to\R$,%
\mllink{analysis/calculus/specific_functions.lean\#L481}
where $E$ is a finite dimensional normed vector space over $\R$.
Suppose $\varphi_R$ is such a smooth bump function on $G$ with $\tsupp(\varphi_R)=B_R(0)$,
normed so that $\int\varphi_R(x)d\mu(x)=1$.
Using these functions, we get a much simpler corollary from \Cref{prop:conv_tendsto}.

\begin{corollary}
  \label{cor:conv_tendsto_bump}
  If $g$ is a strongly $\mu$-a.e.~measurable function continuous at $x_0\in G$, then
  $R\mapsto (\varphi_R*_{\mu}g)(x_0)$ tends to $\nhds_{g(x_0)}$ as $R$ tends to $0$.
  \clink{750}
\end{corollary}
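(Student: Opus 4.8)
This corollary is the special case of \Cref{prop:conv_tendsto} obtained by taking the indexing type to be $\R$ (or the positive reals), the filter $M$ to be the one governing $R\to0$, and the family $(\varphi_i)_i$ to be the normalized smooth bump functions $(\varphi_R)_R$. The plan is therefore to check that these bump functions satisfy the three hypotheses of \Cref{prop:conv_tendsto} and then invoke it directly.

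First I would observe that each $\varphi_R$ is nonnegative and has integral $1$ by construction, so the hypothesis on the family's values and normalization is immediate. The hypothesis on $g$—that it is strongly $\mu$-a.e.~measurable and continuous at $x_0$—is literally the same as in the present statement, so nothing needs to be verified there.

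The only real content is checking that $\supp(\varphi_R)$ tends to the small sets w.r.t.\ $\nhds_0$ as $R$ tends to $0$. Using the defining property of the small-sets filter recalled above, this amounts to showing that for every neighborhood $A$ of $0\in G$ we have $\forallf{R}{M}, \supp(\varphi_R)\subseteq A$. Since $\supp(\varphi_R)\subseteq\tsupp(\varphi_R)=\ball_R(0)$ and the balls $\ball_R(0)$ form a neighborhood basis of $0$ in $G$, there is some $R_0>0$ with $\ball_{R_0}(0)\subseteq A$; then for all $R<R_0$—an eventually-true condition with respect to $M$—we have $\supp(\varphi_R)\subseteq\ball_R(0)\subseteq\ball_{R_0}(0)\subseteq A$. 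This is the step where the geometric meaning of ``$R$ tends to $0$'' gets translated into the filter-theoretic condition on supports, and it is essentially the only place where any work is required.

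With all three hypotheses in hand, \Cref{prop:conv_tendsto} gives that $R\mapsto(\varphi_R*_\mu g)(x_0)$ tends to $\nhds_{g(x_0)}$ as $R$ tends to $0$, which is precisely the assertion. The one point demanding care in a formalization is the bookkeeping around the support-to-small-sets translation and the restriction of the radius to positive values as $R\to0$; the rest is a direct instantiation.
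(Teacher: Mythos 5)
Your proposal is correct and matches the paper's intended argument: the corollary is obtained precisely by instantiating \Cref{prop:conv_tendsto} with the bump family $(\varphi_R)_R$, the filter of $R$ tending to $0$, and the observation that $\supp(\varphi_R)\subseteq\tsupp(\varphi_R)=\ball_R(0)$ forces the supports to tend to the small-sets filter of $\nhds_0$. Your verification of the support condition via the neighborhood basis of balls is exactly the bookkeeping the formalization performs, so there is nothing to add.
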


\section{Concluding Thoughts}%
\label{sec:conclusion}

This formalization was added to \textsf{mathlib} in a sequence of 12 pull requests.
These pull requests added roughly 400 lines of code to existing files
throughout the filter, analysis and measure theory libraries
and 900 lines of code to a new file about convolutions.

Therefore, this is not a particularly long formalization,
despite the various results we have formalized.
The shortness is a virtue:
it shows that one can do real mathematics and do complex arguments in analysis
with relatively little effort.

We notice that the type-classes Lean uses reduce the cognitive overhead of the formalizer.
Lean regularly gives an error, complaining that there is a missing type-class assumption in a
proposition statement. Furthermore, \textsf{mathlib} has a suite of linters
that check if lemmas unused type-class assumptions~\cite{vandoorn2020maintaining}.
It can also automatically check
whether a weaker type-class suffices for a particular lemma~\cite{best2021typeclasses}.
This means that the end-user can give a quick approximation of the needed type-classes,
and Lean will give feedback if more or fewer are needed.

This convolution library has already been used in the Sphere Eversion Project\href{https://leanprover-community.github.io/sphere-eversion/}{\link} to show that a particular sequence of loops
approximate a given loop using \Cref{cor:conv_tendsto_bump}.
\href{https://github.com/leanprover-community/sphere-eversion/blob/79d11867588fd6725e6babc1d7c9093d09bded26/src/loops/delta_mollifier.lean#L322}{\link}

It would be interesting to write more theories on top of this library for convolutions.
In particular it would be interesting to define distributions.
The inclusion map from locally integrable function to distributions can be defined in terms of a
convolution of two functions, and the properties of this map should be easily derivable
from the properties in this paper.

In Sections \ref{sec:smoothness} and \ref{sec:approximations} we focused on the convolution
when one of the functions is compactly supported.
There are other conditions one can impose on the functions to still get a nice theory of
convolutions.
For example, one can relax the condition that one of the function is compactly supported,
by requiring that both functions are rapidly decreasing.
It would be interesting to investigate the convolution with other requirements on the functions.

\section*{Acknowledgements}
I would like to thanks to Patrick Massot for the mentoring and collaboration during my postdoc,
and for proofreading the first version of this article.
I would like to thank everyone in the \textsf{mathlib} community for developing a usable and convenient library.
Finally, I would like to thank Fondation Mathématique Jacques Hadamard for the financial support for this project.

\Urlmuskip=0mu plus 1mu 
\bibliographystyle{amsalphaurl}
\bibliography{references}
\end{document}